\documentclass[pmlr]{jmlr}

\usepackage{cmap} 
\usepackage[T1]{fontenc}
\usepackage{microtype} 

\usepackage{booktabs}
\usepackage{multirow}
\usepackage{colortbl}
\usepackage{tikz}
\usepackage{algorithm}
\usepackage{algorithmic}

\usetikzlibrary{shapes, shapes.geometric, arrows.meta, positioning, fit, patterns, shadows, backgrounds, calc, decorations.pathreplacing, decorations.pathmorphing}

\newcommand{\todo}[1]{}
\newcommand{\teach}[1]{}

\jmlrproceedings{PMLR}{Proceedings of Machine Learning Research}

\title[Continuous Modal Logical Neural Networks]{Continuous Modal Logical Neural Networks: Modal Reasoning via Stochastic Accessibility}

\author{%
  \Name{Antonin Sulc}\\
  \addr Lawrence Berkeley National Laboratory, Berkeley, CA, USA,
  \Email{asulc@lbl.gov}
}

\begin{document}

\maketitle

\begin{abstract}
We propose Fluid Logic, a paradigm in which modal logical reasoning, temporal, epistemic, doxastic, deontic, is lifted from discrete Kripke structures to continuous manifolds via Neural Stochastic Differential Equations (Neural SDEs).  Each type of modal operator is backed by a dedicated Neural SDE, and nested formulas compose these SDEs in a single differentiable graph.  A key instantiation is Logic-Informed Neural Networks (LINNs): analogous to Physics-Informed Neural Networks (PINNs), LINNs embed modal logical formulas such as ($\Box$ bounded) and ($\Diamond$ visits\_lobe) directly into the training loss, guiding neural networks to produce solutions that are structurally consistent with prescribed logical properties, without requiring knowledge of the governing equations.

The resulting framework, Continuous Modal Logical Neural Networks (CMLNNs), yields several key properties: (i) stochastic diffusion prevents quantifier collapse ($\Box$ and $\Diamond$ differ), unlike deterministic ODEs; (ii) modal operators are entropic risk measures, sound with respect to risk-based semantics with explicit Monte Carlo concentration guarantees; (iii)SDE-induced accessibility provides structural correspondence with classical modal axioms; (iv) parameterizing accessibility through dynamics reduces memory from quadratic in world count to linear in parameters.

Three case studies demonstrate that Fluid Logic and LINNs can guide neural networks to produce consistent solutions across diverse domains: epistemic/doxastic logic (multi-robot hallucination detection), temporal logic (recovering the Lorenz attractor geometry from logical constraints alone), and deontic logic (learning safe confinement dynamics from a logical specification).
\end{abstract}

\begin{keywords}
Fluid Logic, Logic-Informed Neural Networks, Modal Logic, Neural SDEs, Neurosymbolic AI
\end{keywords}

\vspace{-0.85em}\section{Introduction}
\label{sec:intro}

Modal Logical Neural Networks (MLNNs)~\citep{sulc2025mlnn} established differentiable $\Box$ and $\Diamond$ operators over discrete worlds. However, finite world sets fundamentally limit reasoning over continuous physics, shifting beliefs, and temporal trajectories.

This paper introduces Continuous Modal Logical Neural Networks (CMLNNs), lifting modal reasoning from discrete Kripke structures to continuous manifolds via Neural SDEs~\citep{li2020scalable}. The central insight is that each modal logic type---temporal, epistemic, doxastic, deontic---corresponds to a different stochastic process over the same state space. Nested formulas compose these processes: $B_a(\Box_{[0,T]}\text{safe})$ runs believed dynamics, then evaluates temporal necessity along each path.

We call this paradigm Fluid Logic: logical truth flows stochastically through a manifold according to learned dynamics.  Accessibility becomes stochastic reachability, $w'$ is accessible from $w$ to the degree that $w'$ lies within the distribution of SDE sample paths from $w$.

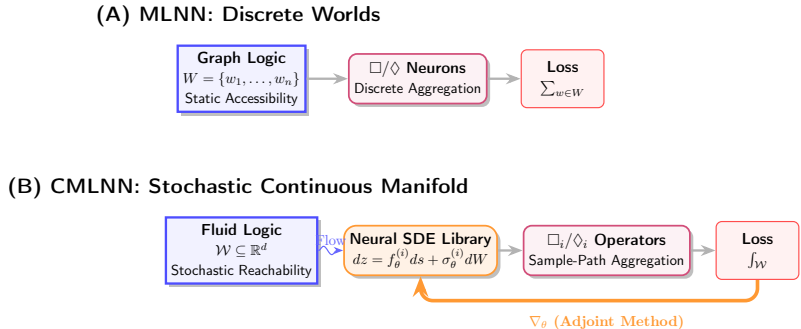
\begin{figure}[t!]
  \vspace{-1em}
  \centering
  \resizebox{0.7\linewidth}{!}{
  \begin{tikzpicture}[
      font=\sffamily,
      >=Stealth,
      node distance=0.6cm and 0.8cm,
      state_tensor/.style={
          rectangle, 
          draw=blue!70, 
          ultra thick, 
          fill=blue!5, 
          minimum width=2.4cm, 
          minimum height=1.6cm, 
          align=center, 
          drop shadow
      },
      sde_block/.style={
          rectangle, rounded corners=3mm, draw=orange!80, very thick, 
          fill=orange!10, minimum width=2.8cm, minimum height=1.2cm,
          align=center, drop shadow
      },
      modal_block/.style={
          rectangle, rounded corners=2mm, draw=purple!70, very thick, 
          fill=purple!5, minimum width=2.8cm, minimum height=1.2cm,
          align=center, drop shadow
      },
      loss_box/.style={
          rectangle, rounded corners, draw=red!70, thick, fill=red!5, 
          minimum width=2.0cm, minimum height=1.4cm, align=center
      },
      flow/.style={->, ultra thick, gray!60},
      gradient/.style={->, line width=2.5pt, orange!80, rounded corners=3mm},
      trajectory/.style={->, thick, blue!60, decorate, decoration={snake, amplitude=1mm, segment length=4mm}}
  ]
  
      
      \node[state_tensor] (DiscreteW) {
          \textbf{Graph Logic}\\
          {\small $W = \{w_1, \ldots, w_n\}$}\\
          {\small Static Accessibility}
      };
  
      \node[modal_block, right=1.0cm of DiscreteW] (DiscreteModal) {
          \textbf{$\Box/\Diamond$ Neurons}\\
          {\small Discrete Aggregation}
      };
      \draw[flow] (DiscreteW.east) -- (DiscreteModal.west);
  
      \node[loss_box, right=0.8cm of DiscreteModal] (LossA) {
          \textbf{Loss}\\
          {\small $\sum_{w \in W}$}
      };
      \draw[flow] (DiscreteModal.east) -- (LossA.west);
  
      \node[above=0.3cm of DiscreteW, align=center] (HeaderA) {
          \Large \textbf{(A) MLNN: Discrete Worlds}
      };
  
      
      \node[state_tensor, below=2.5cm of DiscreteW] (ContW) {
          \textbf{Fluid Logic}\\
          {\small $\mathcal{W} \subseteq \mathbb{R}^d$}\\
          {\small Stochastic Reachability}
      };
  
      \node[sde_block, right=0.6cm of ContW] (SDEBlock) {
          \textbf{Neural SDE Library}\\
          {\small $dz = f^{(i)}_\theta ds + \sigma^{(i)}_\theta dW$}
      };
      \draw[trajectory] (ContW.east) -- (SDEBlock.west) node[midway, above, font=\small] {Flow};
  
      \node[modal_block, right=0.6cm of SDEBlock] (ContModal) {
          \textbf{$\Box_i/\Diamond_i$ Operators}\\
          {\small Sample-Path Aggregation}
      };
      \draw[flow] (SDEBlock.east) -- (ContModal.west);
  
      \node[loss_box, right=0.6cm of ContModal] (LossB) {
          \textbf{Loss}\\
          {\small $\int_{\mathcal{W}}$}
      };
      \draw[flow] (ContModal.east) -- (LossB.west);
  
      \draw[gradient] (LossB.south) -- ++(0, -0.5) -| (SDEBlock.south);
      \node[text=orange!80, font=\bfseries\small, below=0.8cm of ContModal] {$\nabla_\theta$ (Adjoint Method)};
  
      \node[above=0.3cm of ContW, align=center] (HeaderB) {
          \Large \textbf{(B) CMLNN: Stochastic Continuous Manifold}
      };
  
  \end{tikzpicture}}
  \caption{The transition from Graph Logic to Fluid Logic. (A)~MLNNs operate over finite world sets with static, enumerable accessibility relations. (B)~CMLNNs embed worlds in a continuous manifold $\mathcal{W}$ where each modal operator $\Box_i, \Diamond_i$ is defined by its own Neural SDE.  The stochastic diffusion coefficient $\sigma^{(i)}_\theta$ generates genuine branching, giving $\Box$ (all sample paths) and $\Diamond$ (exists a sample path) semantically distinct meanings.  Gradients flow through the SDE solver via the stochastic adjoint method.}
\label{fig:graphical_abstract}
\vspace{-1.5em}
\end{figure}

The key advantages of Fluid Logic are: (1)~each operator type is backed by its own SDE, enabling multi-modal formulas impossible in STL; (2)~SDEs generate distributions over futures, giving $\Box$ and $\Diamond$ distinct semantics and preventing quantifier collapse (Theorem~\ref{thm:quantifier_noncollapse}); (3)~modal axioms emerge from SDE properties (initialization, reversibility, ergodicity) rather than explicit regularization; (4)~operators are entropic risk measures, sound with respect to risk-based modal semantics with MC concentration guarantees (Theorem~\ref{thm:soundness}).

\textbf{Contributions.}
(i)~We propose Fluid Logic, a paradigm where each modal operator type is backed by a dedicated Neural SDE, with nested formulas composing SDEs.
(ii)~We introduce Logic-Informed Neural Networks (LINNs), which embed modal logical specifications directly into the training loss, guiding neural networks to produce solutions consistent with prescribed logical properties.
(iii)~We prove that stochastic diffusion prevents quantifier collapse ($\Box \neq \Diamond$) and that the operators are sound w.r.t.\ entropic risk-based semantics.
(iv)~We demonstrate CMLNNs on three case studies spanning epistemic, doxastic, temporal, and deontic modal logics.

\vspace{-0.85em}\section{Related Work}
\label{sec:relatedwork}

\textbf{Modal Logical Neural Networks.}
MLNNs~\citep{sulc2025mlnn} introduced differentiable Kripke semantics with $\Box/\Diamond$ neurons over discrete worlds.  CMLNNs extend this from graph-based logic (worlds as nodes, accessibility as edges) to fluid logic (worlds on manifolds, accessibility via stochastic flow), while supporting multi-modal reasoning via an SDE library.

\textbf{Neural SDEs.}
Neural SDEs~\citep{li2020scalable} extend Neural ODEs~\citep{chen2018neural} with learned diffusion.  We use them in a fundamentally different way: not for generative modeling or stochastic depth, but to define the \emph{accessibility structure} of modal logics, where diffusion controls the branching of possible worlds.

\textbf{Signal Temporal Logic (STL).}
STL~\citep{maler2004monitoring} provides differentiable $\Box/\Diamond$ over continuous time signals~\citep{donze2010robust}.  The CMLNN temporal operator in isolation is closely related to differentiable STL robustness over SDE rollouts.  CMLNNs extend STL in three directions: (i)~interval truth bounds $[L, U]$ with soundness guarantees; (ii)~\emph{non-temporal} modalities (epistemic, doxastic, deontic) via the multi-SDE architecture, which STL cannot express; (iii)~multi-SDE composition where nested formulas chain operators with different semantics and initializations.

\textbf{Other Related Work.}
Differential Dynamic Logic (d$\mathcal{L}$) provides formal hybrid system verification but requires hand-crafted invariants\footnote{A standard reference is Platzer, \textit{Logical Foundations of Cyber-Physical Systems}, Springer, 2018.}; CMLNNs learn structures from data with probabilistic guarantees.  Neural Lyapunov~\citep{chang2019neural} and barrier methods~\citep{ames2017control} certify single properties but cannot compose temporal invariance with epistemic uncertainty.  Probabilistic model checkers (PRISM, STORM) require finite discrete state spaces.

\vspace{-0.85em}\section{Method: Continuous Modal Logical Neural Networks}
\label{sec:method}

\subsection{From Graph Logic to Fluid Logic}
\label{sec:method:prelim}

In a discrete Kripke model, worlds are graph nodes and accessibility is a fixed edge matrix. CMLNNs lift this to a continuous manifold $\mathcal{W} \subseteq \mathbb{R}^d$: each modality type $i$ is a Neural SDE $dz = f^{(i)}_\theta(z,s)\,ds + \sigma^{(i)}_\theta(z,s)\,dW_s$ whose sample paths define stochastic accessibility. Accessibility becomes reachability under the learned dynamics, and memory scales as $O(|\theta|)$ rather than $O(|W|^2)$. Unlike deterministic ODEs where $\Box\phi = \Diamond\phi$ (quantifier collapse), an SDE fans out into a distribution: $\Box$ acts as a soft worst-case over paths while $\Diamond$ acts as a soft best-case. The diffusion coefficient $\sigma^{(i)}_\theta$ is the learned branching factor that makes these operators genuinely distinct (Theorem~\ref{thm:quantifier_noncollapse}).

\subsection{Continuous Modal Operators}
\label{sec:method:modal_ops}

Atomic formulas return real-valued robustness margins, and all composite formulas propagate robustness in the STL sense (positive means satisfied, negative violated).  For concentration bounds and stable training, we assume robustness is bounded on the relevant domain (either because $\mathcal{W}$ is compact and valuations are Lipschitz, or by explicit clipping): $L_\phi(w), U_\phi(w) \in [-B,B]$ for some $B>0$.  When a $[0,1]$ truth degree is needed for visualization, we report the normalized value $\tilde{v}_\phi(w) = \sigma(L_\phi(w)/\beta)$ using a sigmoid $\sigma$ and scale $\beta$.  Figures and tables are labeled explicitly: ``robustness'' when reporting raw $L$, $U$; ``$\tilde{v}$'' or ``normalized'' when reporting values in $[0,1]$.

The CMLNN maintains an interval $[L_\phi(w),\, U_\phi(w)]$ bracketing the robustness of formula $\phi$ at world $w$. To evaluate modal properties, we draw $N_{\mathrm{mc}}$ sample paths $\{\omega_n\}$ of SDE~$i$ from $w$ over $K$ time-grid points $\{s_k\}$. We first compute the per-path score (a soft minimum of $L_\phi$ along path $\omega_n$), which finds the ``weakest link'' in $\phi$-satisfaction over time:
\begin{equation}
g_n = -\tau_s \log \frac{1}{K}\sum_{k=1}^{K} \exp\!\left(-\frac{L_\phi(\Phi^{(i),s_k}_\theta(w; \omega_n))}{\tau_s}\right)
\label{eq:per_path_score}
\end{equation}
The necessity ($\Box_i$) and possibility ($\Diamond_i$) operators aggregate these scores across paths:
\begin{align}
L_{\Box_i\phi}(w) &= -\tau_\omega \log \tfrac{1}{N_{\mathrm{mc}}} \sum_{n} \exp\!\left(-{g_n}/{\tau_\omega}\right)
\label{eq:stoch_necessity} \\
U_{\Diamond_i\phi}(w) &= \;\;\tau_\omega \log \tfrac{1}{N_{\mathrm{mc}}} \sum_{n} \exp\!\left({h_n}/{\tau_\omega}\right)
\label{eq:stoch_possibility}
\end{align}
where $h_n$ is the soft maximum of $U_\phi$ along $\omega_n$. This makes both operators entropic risk measures at temperature $\tau$: $\Box_i$ uses a soft minimum to robustly identify the worst-case path, while $\Diamond_i$ uses a soft maximum to identify the best-case path. As $\tau\to 0^+$, they recover classical $\forall/\exists$ quantifiers. The key structural property is the soundness gap:
\begin{equation}
L_{\Box_i\phi}(w) \;\leq\; L_\phi(w) + \tau_s \log K + \tau_\omega \log N_{\mathrm{mc}}
\label{eq:softmin_gap}
\end{equation}
This is a T-style inclusion property: because $s{=}0$ is among the softmin arguments, the inner softmin yields $g_n \leq L_\phi(w) + \tau_s\log K$, and the outer aggregation adds $\tau_\omega\log N_{\mathrm{mc}}$; as $\tau_s,\tau_\omega\!\to\!0$, the gap vanishes. Population-level equations appear in Appendix~\ref{sec:appendix:population_ops}.

\subsection{The Multi-Modal SDE Library}
\label{sec:method:multimodal}

The unifying architecture is a library of named SDEs, one per modality:
\begin{equation}
\mathcal{F} = \left\{ \left(i,\; f^{(i)}_\theta,\; \sigma^{(i)}_\theta,\; \texttt{init}^{(i)} \right) \right\}_{i \in I}
\label{eq:sde_library}
\end{equation}
Nested formulas compose SDEs.  For $B_a(\Box_{[0,T]}\text{safe})$: (1)~draw doxastic sample paths from $\hat{x}_a$; (2)~along each, evaluate temporal $\Box$ using the physical SDE; (3)~aggregate via outer softmin.

The temporal operator ($\Box_{[a,b]}\phi$) uses an SDE with drift $f^{(\text{temp})}_\theta$ that learns physical dynamics, initialized at the true state $z(0) = x_{\text{true}}$.  The epistemic operator ($K_a\phi$) uses a conditional SDE $f^{(\text{epist})}_{\theta,a}(z, y_a, s)$ that explores observation-consistent states from $z(0) = x$, serving as a continuous relaxation of classical S5 knowledge.  The doxastic operator ($B_a\phi$) uses the agent's (possibly wrong) internal model $\hat{f}^{(\text{dox})}_{\theta,a}$ initialized at $z(0) = \hat{x}_a \neq x$; Axiom~T fails by construction, since beliefs can be false.  The deontic operator ($O\phi$) uses an SDE trained to stay within a permissible region $\mathcal{P}$; $O\phi \not\to \phi$ because obligations can be violated.

The epistemic/doxastic distinction reduces to a single design choice: SDE initialization at the true state ($K_a$, Axiom~T encouraged) versus the believed state ($B_a$, Axiom~T fails).  Hallucination detection is $B_a(\text{safe}) \land \neg\text{safe}$: the doxastic and temporal SDEs disagree.

\subsection{Axiomatic Properties}
\label{sec:method:axioms}

Classical modal axioms correspond to structural properties of the SDE architecture (full correspondence in Table~\ref{tab:axiom_sde}, Appendix).  These are structural analogues, not formal proofs of classical axiom satisfaction.  Axiom~T (veridicality, $\Box\phi \to \phi$) is encouraged because $s{=}0$ is in the time grid, so $L_\phi(w)$ is always among the softmin arguments: $L_{\Box_i\phi}(w) \leq L_\phi(w) + \tau_s\log K + \tau_\omega \log N_{\mathrm{mc}}$ (Eq.~\ref{eq:softmin_gap}); the gap vanishes as $\tau_s,\tau_\omega \to 0$.  Axiom~D (seriality) holds exactly since any SDE generates at least one path.  Axiom~4 (transitivity) holds approximately via the Markov semigroup property.  Axiom~T fails by construction for the doxastic SDE ($z(0) = \hat{x} \neq x$), which is what enables hallucination detection.

\subsection{Training}
\label{sec:method:learning}

The CMLNN is trained end-to-end:
\begin{equation}
\mathcal{L}_{\text{total}} = \mathcal{L}_{\text{task}} + \beta \mathcal{L}_{\text{contra}} + \gamma \mathcal{L}_{\text{physics}} + \sum_{i} \lambda_i \mathcal{L}_{\text{axiom}}^{(i)}
\label{eq:total_loss}
\end{equation}
where $\mathcal{L}_{\text{contra}}$ penalizes contradictions ($L > U$), $\mathcal{L}_{\text{physics}}$ optionally enforces known physics, and $\mathcal{L}_{\text{axiom}}^{(i)}$ encourages modality-specific axioms.  Gradients flow through the SDE solver via the stochastic adjoint method~\citep{li2020scalable}.  In pure satisfiability mode, $\mathcal{L}_{\text{task}} = 0$ and all supervision comes from logical specifications.

\vspace{-0.85em}\section{Theoretical Analysis}
\label{sec:theory}

\subsection{Soundness}
\label{sec:theory:soundness}

The CMLNN operators are sound with respect to entropic risk-based modal semantics, not classical Kripke semantics.  At finite temperature, the operators are more optimistic than the essential infimum and can miss rare-but-catastrophic paths.

\begin{theorem}[Soundness of CMLNN Risk-Based Bounds]
\label{thm:soundness}
Let a CMLNN be initialized with theory $\Gamma_0$ and SDE library $\mathcal{F}$ with non-empty consistent probability measures. Assume the per-path score is bounded $g_n \in [-B,B]$ (e.g., via bounded/Lipschitz valuations on compact $\mathcal{W}$ or explicit clipping), hence $\hat{L}_\phi \in [-B,B]$.  For the concentration bound, assume $B \leq C\tau_\omega$ for some constant $C$ (e.g., via clipping $g/\tau_\omega$ or rescaling so the exponential range stays bounded):
\textbf{(a)}~The population soft operators satisfy $L^{\mathrm{pop}}_\phi(w) \leq \mathcal{P}^{\mathrm{risk}}_\tau(S_{\phi,w})$ for the entropic risk measure $\mathcal{P}^{\mathrm{risk}}_\tau$ (deterministic, no MC error).
\textbf{(b)}~Under $B \leq C\tau_\omega$, the MC estimator concentrates: $\mathbb{P}( | \hat{L}_\phi - L^{\mathrm{pop}}_\phi | \geq \epsilon ) \leq 2\exp\!\big(-c(C)\,N_{\mathrm{mc}}\epsilon^2/\tau_\omega^2\big)$ for some $c(C) > 0$ depending on $C$ (an explicit expression is given in Appendix~\ref{sec:appendix:proofs}).
\textbf{(c)}~As $\tau \to 0^+$, the softmin/softmax converge to $\operatorname{ess\,inf}$/$\operatorname{ess\,sup}$; classical Kripke semantics (Boolean truth at each world) follows only under an explicit \emph{Booleanization} assumption on atomic valuations (e.g., sign-thresholding or $\{0,1\}$-valued atoms).  In the fixed-$B$ regime (without $B \leq C\tau_\omega$), sample complexity diverges as $\tau_\omega \to 0$ for fixed $B$, reflecting the $e^{\Theta(B/\tau_\omega)}$ blow-up in the concentration bound.
\end{theorem}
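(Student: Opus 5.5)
The plan is to prove the three parts separately, treating the population operator as the exact entropic functional of the per-path score and the Monte Carlo estimator as a smooth function of an empirical mean.

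\textbf{Setup.} Fix a world $w$ and a modality $i$. Write $G = g(\omega)$ for the per-path softmin score (Eq.~\ref{eq:per_path_score}), viewed as a random variable under the path law of SDE~$i$ started from $\mathtt{init}^{(i)}(w)$. By assumption $G \in [-B,B]$. The population necessity operator is
\[
L^{\mathrm{pop}}_{\Box_i\phi}(w) = -\tau_\omega \log \mathbb{E}\bigl[e^{-G/\tau_\omega}\bigr],
\]
the entropic risk of $G$ at temperature $\tau_\omega$. For $\Diamond_i$ the analogous formula holds with $h$ and a sign flip.

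\textbf{Part (a).} I would proceed by structural induction on $\phi$. Atoms hold with equality by the definition of $\mathcal{P}^{\mathrm{risk}}_\tau$, which I take to be defined compositionally with the same soft operators over the score $S_{\phi,w}$. Boolean connectives preserve the inequality because softmin and softmax are monotone. For the modal step, both the inner time softmin and the outer entropic aggregation are monotone nondecreasing in their arguments. So the inductive hypothesis $L_\phi \le \mathcal{P}^{\mathrm{risk}}_\tau$, applied pointwise along each path, transfers through both layers. The step is deterministic because no sampling enters. Nonemptiness of the path measures guarantees that the expectations are finite and positive, so every logarithm is defined.

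\textbf{Part (b).} Set $Y_n = e^{-g_n/\tau_\omega}$. These are i.i.d.\ with $Y_n \in [e^{-B/\tau_\omega}, e^{B/\tau_\omega}] \subseteq [e^{-C}, e^{C}]$. Hoeffding's inequality gives
\[
\mathbb{P}\bigl(|\bar Y - \mathbb{E} Y| \ge \delta\bigr) \le 2\exp\bigl(-2N_{\mathrm{mc}}\delta^2/(e^{C}-e^{-C})^2\bigr).
\]
Now transfer this through $x \mapsto -\tau_\omega \log x$. On $[e^{-C}, e^{C}]$ this map is Lipschitz with constant $\tau_\omega e^{C}$, and both $\bar Y$ and $\mathbb{E} Y$ lie in that interval. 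Hence $|\hat L - L^{\mathrm{pop}}| \ge \epsilon$ forces $|\bar Y - \mathbb{E} Y| \ge \epsilon e^{-C}/\tau_\omega$. Substituting yields the stated bound with $c(C) = 2e^{-2C}/(e^{C}-e^{-C})^2$.

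The main obstacle is the nested case, where the inner quantities are themselves MC estimates. I would handle it by conditioning on the outer paths and applying a union bound across nesting levels, or by restricting (b) to a single modal layer with exact inner valuations, as the statement implicitly does.

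\textbf{Part (c).} For the limit, use the standard Laplace/Varadhan fact that $-\tau \log \mathbb{E}\, e^{-G/\tau} \to \operatorname{ess\,inf} G$ as $\tau \to 0^+$ for bounded $G$. The argument sandwiches the expression between $\operatorname{ess\,inf} G$ and $\operatorname{ess\,inf} G + \eta - \tau \log \mathbb{P}(G \le \operatorname{ess\,inf} G + \eta)$, then lets $\tau \to 0$ and afterwards $\eta \to 0$. The inner time softmin over $K$ points converges to the minimum in the same way, and the softmax case is symmetric.

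Classical Kripke truth then requires the atoms to be $\{0,1\}$-valued, or sign-thresholded. In that case $\operatorname{ess\,inf}/\operatorname{ess\,sup}$ reduce to the $\forall/\exists$ quantifiers over almost-surely reachable worlds.

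Finally, for the divergence claim in the fixed-$B$ regime, redo the computation of (b) with $C = B/\tau_\omega$. The exponent becomes $N_{\mathrm{mc}}\epsilon^2 e^{-\Theta(B/\tau_\omega)}/\tau_\omega^2$, so the required $N_{\mathrm{mc}}$ grows like $e^{\Theta(B/\tau_\omega)}$. A two-point distribution that places mass $e^{-B/\tau_\omega}$ on $-B$ shows the blow-up is genuine and not an artifact of the proof: fewer samples than about $e^{B/\tau_\omega}$ typically miss the rare value entirely.
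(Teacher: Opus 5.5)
Your proposal is correct and follows the paper's proof essentially step for step: monotone structural induction for (a); for (b), Hoeffding on $e^{-g_n/\tau_\omega}$ followed by the Lipschitz transfer through $x\mapsto-\tau_\omega\log x$ (constant $\tau_\omega e^{C}$), which yields exactly the paper's $c(C)=2e^{-2C}/(e^{C}-e^{-C})^{2}$; and the $\tau\to0^+$ limit for (c). Your explicit Laplace sandwich and two-point lower-bound example make (c) more rigorous than the paper's appeal to monotone convergence and to the degraded upper bound, but if you keep the remark about Boolean connectives you should run a two-sided induction ($L\le\mathcal{P}^{\mathrm{risk}}_\tau\le U$), because negation is antitone and the one-sided inequality does not pass through it.
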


Theorem~\ref{thm:soundness} ensures the network's outputs are mathematically sound lower bounds on system safety up to a chosen risk threshold $\tau$, with Monte Carlo concentration. Crucially, the soundness is strictly w.r.t.\ entropic risk measures, not classical Kripke semantics. At finite $\tau$, risk measures are more optimistic than the essential infimum and may ignore rare catastrophic paths, acting as a well-posed optimization target rather than an airtight verification guarantee. Full proof in Appendix~\ref{sec:appendix:proofs}.

\subsection{Quantifier Non-Collapse}
\label{sec:theory:quantifier}

\begin{theorem}[Quantifier Non-Collapse]
\label{thm:quantifier_noncollapse}
Let $\sigma^{(i)}_\theta$ be non-degenerate on a set of positive measure.  Then there exists $\phi, w$ such that $L_{\Box_i\phi}(w) < U_{\Diamond_i\phi}(w)$.
\end{theorem}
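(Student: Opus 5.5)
The plan is an explicit construction. We are free to choose $\phi$ and $w$, so we pick the simplest formula whose robustness varies along the paths, and then show that the soft worst case over paths is strictly below the soft best case. We work with the population operators of Appendix~\ref{sec:appendix:population_ops}; the Monte Carlo version follows from Theorem~\ref{thm:soundness}(b) with high probability.

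First, localize the noise. Non-degeneracy of $\sigma^{(i)}_\theta$ on a set of positive measure lets us pick a world $w$ and a short horizon $s_1>0$ such that the time-$s_1$ marginal of $\Phi^{(i),s_1}_\theta(w;\cdot)$ is not a Dirac mass. Concretely, we take $w$ inside (or at a density point of) the non-degenerate region. On a short time interval the Itô increment has covariance approximately $\sigma\sigma^\top s_1\neq 0$, so $Z_{s_1}$ has nonzero variance in some direction $v$. Making this rigorous is the main obstacle: we must rule out drift and diffusion conspiring to produce a deterministic state at every grid time. We handle it by evaluating at the first grid point $s_1$ and using the Itô isometry. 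This gives $\mathrm{Var}(v^\top Z_{s_1}) = \mathbb{E}\int_0^{s_1}|v^\top\sigma|^2\,ds + o(s_1)$, which is positive for small $s_1$ when $\sigma$ is continuous and nonzero near $w$. If only measurability is assumed, we instead pick the grid point so that the integrated diffusion is positive; positive measure guarantees that such a choice exists.

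Second, choose the formula. Take the atomic valuation $L_\phi(z)=U_\phi(z)=v^\top(z-w)$, clipped to $[-B,B]$ so that it is bounded as required. With a time grid containing only $\{0,s_1\}$, or in general using only the grid point where variance is positive, the per-path softmin gives $g_n\le \min(0,X_n)+\tau_s\log K$, where $X_n=v^\top(Z^{(n)}_{s_1}-w)$. The softmax gives $h_n\ge \max(0,X_n)$ up to $-\tau_s\log K$. A cleaner route is to drop the $s=0$ term by choosing $\phi$ to be evaluated only at the terminal time, or to set $\tau_s\to0$.

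Third, compare the aggregates. $L_{\Box_i\phi}(w)$ is the entropic risk $-\tau_\omega\log\mathbb{E}e^{-g/\tau_\omega}$, which is at most $\mathbb{E}g$ by Jensen's inequality. Symmetrically, $U_{\Diamond_i\phi}(w)=\tau_\omega\log\mathbb{E}e^{h/\tau_\omega}$ is at least $\mathbb{E}h$. Since $h\ge g$ pathwise, and $h-g\ge |X|-2\tau_s\log K$ with $\mathbb{E}|X|>0$ by nondegeneracy, we obtain
\[
U_{\Diamond_i\phi}(w)-L_{\Box_i\phi}(w)\;\ge\;\mathbb{E}|X|-2\tau_s\log K>0
\]
for $\tau_s$ small enough. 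Alternatively, both Jensen inequalities are strict because $g$ is non-constant, which removes the need for small $\tau_s$. In the Monte Carlo setting we choose $\epsilon$ equal to half this gap and apply the concentration bound of Theorem~\ref{thm:soundness}(b) to conclude that the strict inequality holds with probability tending to one.

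Finally, we remark on the contrast that motivates the result. When $\sigma\equiv0$, every path coincides, so $g_n=h_n$ for Boolean or tight valuations and the two operators coincide as $\tau\to0$. This shows that diffusion is exactly the ingredient that separates $\Box_i$ from $\Diamond_i$.
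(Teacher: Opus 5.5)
Your proof is correct for the statement as posed, and it is more careful than the paper's, which takes a looser route. The paper asserts that non-degenerate $\sigma^{(i)}_\theta$ gives some $w^*,s^*$ with $\mathrm{Var}(\Phi^{(i),s^*}_\theta(w^*;\cdot))>0$. It then picks $\phi(z)=\tanh(z_1-m_1)$ centred at the mean and argues informally that some paths give $\phi\approx 1$ and others $\phi\approx -1$, so the soft best case exceeds the soft worst case. Positive variance alone does not put values near $\pm1$, and that argument ignores the per-path time aggregation. You instead justify the variance via the It\^o isometry and use a clipped linear predicate. You get strictness from the sandwich $L_{\Box_i\phi}(w)\le\mathbb{E}g\le\mathbb{E}h\le U_{\Diamond_i\phi}(w)$, which is strict at every $\tau_s,\tau_\omega>0$ once $g$ is non-constant across paths, and quantitative for small $\tau_s$. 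This needs no large-deviation claim, and the Monte Carlo version follows from Theorem~\ref{thm:soundness}(b).

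There are two points to tighten. First, the time grid is fixed, so ``positive for small $s_1$'' does not by itself give non-degeneracy at the given first grid time. You need to propagate it forward, for example via the Markov property and non-confluence of solutions of a Lipschitz SDE. If $Z_r$ is non-degenerate and $Z_{s_1}=\varphi_{r,s_1}(Z_r)$, where the flow map $\varphi_{r,s_1}$ is a.s.\ injective and independent of the path up to time $r$, then $Z_{s_1}$ cannot be a.s.\ constant.

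Second, your closing remark is false under Eqs.~(\ref{eq:per_path_score})--(\ref{eq:stoch_possibility}), and so is the last sentence of the paper's proof. There $g_n$ is a soft minimum and $h_n$ a soft maximum \emph{over time}. For $\sigma\equiv 0$ one therefore gets $L_{\Box_i\phi}(w)=g$ and $U_{\Diamond_i\phi}(w)=h$. These differ whenever a tight valuation varies along the single trajectory, tending to the $\min_k$ versus the $\max_k$ of the path values as $\tau_s\to 0$. Your own bound shows this: the gap $\mathbb{E}|X|-2\tau_s\log K$ comes from comparing $s=0$ with $s=s_1$ on the same path. A deterministic flow with $\Phi^{s_1}(w)\neq w$ would produce it just as well. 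So diffusion is sufficient for the stated inequality but is not what drives it in your construction. To isolate the branching effect, compare the two path aggregations applied to one and the same per-path score. Strict Jensen then gives a gap exactly when that score is non-constant across paths, and equality when $\sigma\equiv 0$.
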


In deterministic systems, ``all futures'' ($\Box$) and ``some future'' ($\Diamond$) collapse to the exact same statement because there is only one trajectory. Theorem~\ref{thm:quantifier_noncollapse} shows that stochastic diffusion prevents this quantifier collapse by branching paths into distributions, making necessity and possibility genuinely distinct concepts.

\begin{proof}[Sketch]
Non-degenerate diffusion implies positive variance at some $s^*$.  Choose a smooth Lipschitz predicate varying across the support.  \textit{Example:} $\mathcal{W} = \mathbb{R}$, $\phi(z)=\tanh(z)$, $w=0$, $dz=\sigma\,dW_s$.  Then with positive probability $\phi(\Phi^{s^*}(0))$ is close to $1$ and with positive probability it is close to $-1$, so the soft worst-case and soft best-case aggregations differ, yielding $L_{\Box_i\phi}(w) < U_{\Diamond_i\phi}(w)$.  For deterministic ODEs ($\sigma \equiv 0$), the flow is a single trajectory and the two operators coincide.
\end{proof}

\subsection{Complexity}
\label{sec:theory:complexity}

One modal operator evaluation costs $O(N_{\mathrm{mc}} \cdot K \cdot C_f)$ where $C_f = O(d \cdot |\theta_i|)$; sample paths are embarrassingly parallel.  For nesting depth $D$, worst-case cost is $O(N_{\mathrm{mc}}^D \cdot K^D \cdot C_f)$.  Our experiments use $D \leq 2$.  Additional theorems (convergence, universal approximation, sample complexity, differentiability) are in Appendix~\ref{sec:appendix:additional_theorems}.

\vspace{-0.85em}\section{Experiments}
\label{sec:experiments}

We evaluate CMLNNs on three case studies spanning epistemic, doxastic, temporal, and deontic modal logics to validate the framework's multi-modal generality on diverse domains.  An overview is provided in Appendix~\ref{sec:appendix:experiments}.

\subsection{Case Study 1: Epistemic + Doxastic Logic, Swarm Hallucination Detection}
\label{sec:exp:swarm}

We simulate a custom 2D environment where a 5-rover swarm navigates toward a target while avoiding a hazardous spatial region (``chasm''). This allows defining precise multimodal logic fields over continuous space without discrete constraints. Rover~3 suffers a sensor failure, causing its internal model to hallucinate a fake chasm elsewhere, while believing the true chasm is safe. The key formula:
\begin{equation}
B_{\text{rover3}}(\Box_{[0,T]}\text{safe}) \;\land\; K_{\text{swarm}}(\Diamond_{[0,T]}\text{collision})
\label{eq:swarm_hallucination}
\end{equation}
requires two distinct SDEs: doxastic (from Rover~3's believed state) and epistemic (conditioned on healthy sensors). Standard single-SDE or pure STL formulations cannot naturally express this conjunction without defining separate evaluation models.

Three SDEs share state space $\mathcal{W} \subseteq \mathbb{R}^{4 \times 5}$: temporal (true physics, $z(0) = x_{\text{true}}$), epistemic (swarm knowledge, $z(0) = x_{\text{true}}$, conditioned on healthy sensors), and doxastic (Rover~3's internal model, $z(0) = \hat{x}_3$, using outdated dynamics).  The hallucination flag triggers when both $L_{B_3}(\Box\,\text{safe}) > 0.8$ and $U_{K_{\text{swarm}}}(\Diamond\,\text{collision}) > 0.3$.

\begin{figure}[!t]
\vspace{-1em}
\centering
\includegraphics[width=0.8\linewidth]{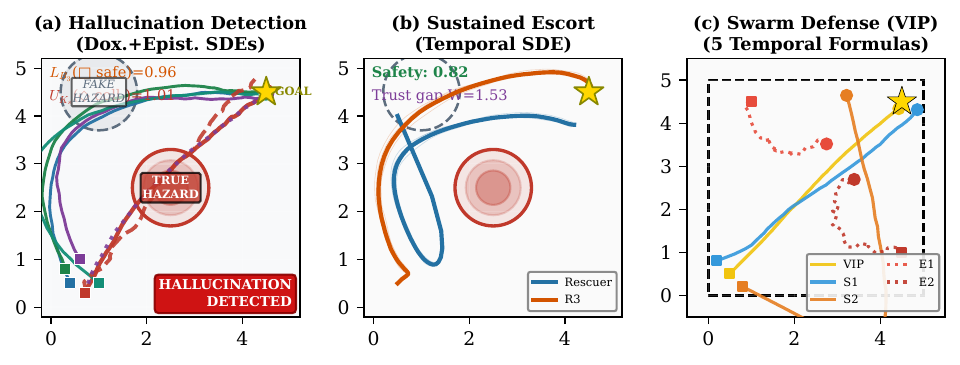}
\vspace{-0.5em}
\caption{Case Study~1: three multi-agent scenarios.
(a)~Hallucination detection (epistemic + doxastic SDEs): Rover~3 (faulty sensor) believes safety where danger exists.  Solid lines show true trajectories; dashed red is the doxastic path; dotted purple is the epistemic path.  The hallucination flag triggers when $L_{B_3}(\Box\,\text{safe}){>}0.8$ and $U_{K_{\text{sw}}}(\Diamond\,\text{coll.}){>}0.3$.
(b)~Sustained escort (temporal SDE): a rescuer (blue) learns to shield Rover~3 (orange) around the true chasm toward the goal (star); final actual safety $0.82$, trust gap $W{=}1.53$.
(c)~Swarm defense (5 temporal formulas): the VIP (yellow) is escorted by scouts S1/S2 while evading enemies E1/E2; five logic objectives are jointly optimised. Additional detail in Appendix~\ref{sec:appendix:experiments}.}
\label{fig:cs1_hallucination}
\vspace{-1.5em}
\end{figure}

\paragraph{Results.}
Hallucination is detected at epoch~1 and persists throughout training (Fig.~\ref{fig:cs1_hallucination}).  During joint training ($N_{\mathrm{mc}}{=}24$), the doxastic bound $L_{B_3}(\Box\,\text{safe})$ consistently exceeds the 0.8 detection threshold (peaking at 1.04), while the epistemic bound $U_{K_{\text{swarm}}}(\Diamond\,\text{collision})$ exceeds 0.3, confirming that Rover~3 believes all futures are safe while the swarm knows collision is possible.  At final evaluation with higher MC precision ($N_{\mathrm{mc}}{=}256$), the more conservative estimates give $L_{B_3} = 0.73$ and $U_{K_{\text{swarm}}} = 1.25$; the doxastic bound falls slightly below the 0.8 threshold, consistent with the MC concentration bound (Prop.~\ref{prop:sample_complexity}).  The Wasserstein belief--reality gap stabilises at $W = 0.39$, quantifying the doxastic--epistemic divergence.  The spatial modal truth field (Fig.~\ref{fig:cs1_modal_field} in appendix) shows that detection arises specifically when Rover~3's initial position places it on a trajectory through the true chasm, i.e.\ exactly where the two SDEs spatially disagree.  Two sub-experiments validate additional multi-agent formulas: (B)~a 2-rover sustained escort scenario where the rescuer learns to shield Rover~3 (final safety $0.82$, trust gap $1.53$), and (C)~a 5-agent swarm defense (VIP protection) with five jointly optimised temporal formulas (in\_bounds: $0.16$, safe: $0.12$, follow: $-0.03$, attack: $-0.20$, explore: $-0.47$).

\subsection{Case Study 2: Temporal Logic, Lorenz Chaotic System (LINNs)}
\label{sec:exp:lorenz}

The Lorenz-63 system is a classic model of chaotic fluid convection, chosen here because its distinct two-lobe butterfly attractor is highly sensitive to initial conditions. This chaotic nature makes it notoriously difficult for deterministic neural models to capture its global structure once trajectories diverge. To address this, we introduce Logic-Informed Neural Networks (LINNs), analogous to PINNs but enforcing modal logical trajectory properties instead of local differential equations:
\begin{equation}
\Box_{[0,T]}\text{(bounded)} \;\land\; \Diamond_{[0,T]}\text{(visits\_lobe)}
\label{eq:lorenz_formula}
\end{equation}
This requires $\Box \neq \Diamond$ on the Lorenz-63 attractor: all trajectories must stay bounded, yet some must visit the secondary lobe.  A deterministic ODE cannot satisfy this (quantifier collapse); the SDE's diffusion provides genuine branching.  Eight models are compared (Neural ODE, Pure MSE SDE, PINN, PINN+noise, Ensemble ODE, ODE+LINN, SDE+LINN, PINN+LINN).

\begin{figure}[!t]
\vspace{-1em}
\centering
\includegraphics[width=0.8\linewidth]{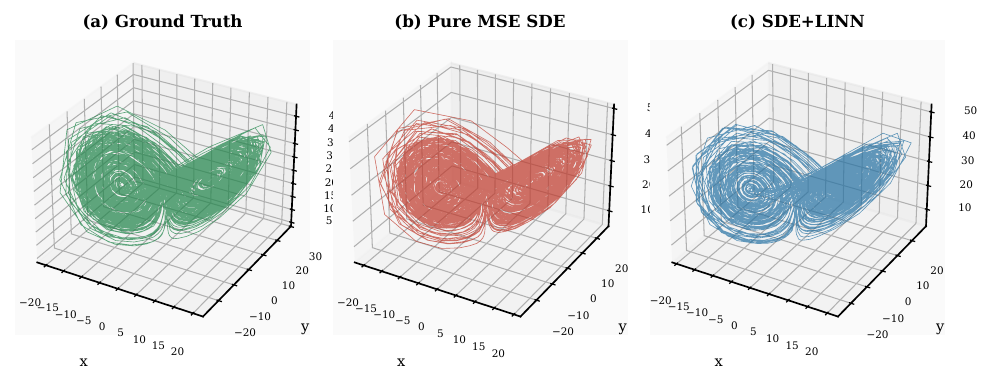}
\vspace{-0.5em}
\caption{Case Study~2: 3D attractor comparison on Lorenz-63. Ground truth shows the canonical two-lobe butterfly; the Pure MSE SDE collapses to one lobe; SDE+LINN recovers both lobes via $\Box(\text{bounded})\land\Diamond(\text{visits\_lobe})$.}
\label{fig:cs2_attractor}
\vspace{-1.5em}
\end{figure}

\begin{figure}[!t]
\vspace{-1em}
\centering
\begin{minipage}{0.48\linewidth}
\centering
\includegraphics[width=0.85\linewidth]{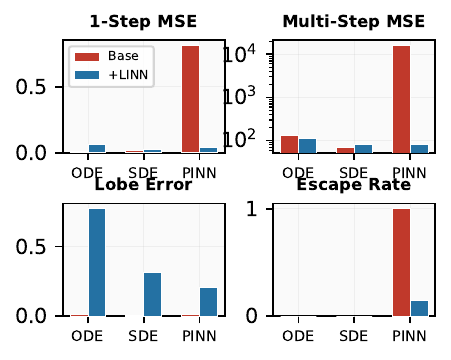}
\end{minipage}\hfill
\begin{minipage}{0.48\linewidth}
\centering
\includegraphics[width=0.85\linewidth]{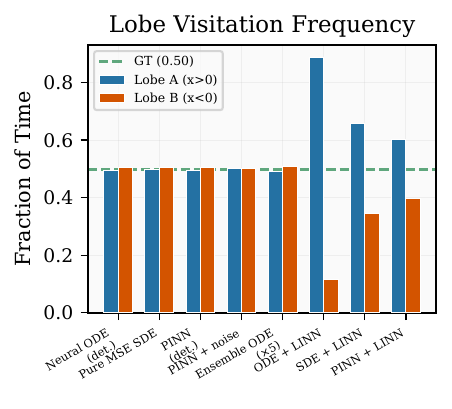}
\end{minipage}
\vspace{-0.5em}
\caption{Case Study~2: LINN as a logic regularizer. Left: Adding LINN to any base model reduces multi-step MSE and escape rate. Right: Lobe visitation frequency. Baseline models appear to have a balanced average Lobe Error, but suffer from structural collapse (individual trajectories rarely switch lobes). LINN forces explicit exploration of both lobes, recovering the global geometry at the cost of slightly perturbing the natural switching frequency.}
\label{fig:cs2_regularizer}
\vspace{-1.5em}
\end{figure}

\paragraph{Results.}
Table~\ref{tab:cs2_results} reports all metrics. Ground truth modal values are computed via 256-path Monte Carlo on the stochastic Lorenz system in the same bounded robustness scale used by our operators (Sec.~\ref{sec:method:modal_ops}; e.g., $\Box(\text{bounded}) = 0.69$, $\Delta_Q = 0.25$). Deterministic models (Neural ODE, PINN) exhibit quantifier collapse ($\Box = \Diamond$ by construction, marked~$\dagger$). The purely data-driven ODE/SDE models minimize trajectory MSE, which naturally keeps them bounded. In contrast, PINNs rely solely on minimizing the local residual. Because Lorenz-63 is highly chaotic, small numerical errors compound rapidly, causing the unconstrained PINN to diverge entirely over long horizons (100\% escape rate). This explicitly serves as an ablation: the comparison is not to claim SOTA physics forecasting, but to demonstrate that correct local dynamics do not ensure global structural stability.

While the baseline Neural ODE and SDE do not escape, they structurally collapse, failing to capture the butterfly geometry. Their Lobe Error (population-level balance of time spent in the left vs.\ right lobe) appears artificially low because different initial conditions collapse into different single lobes, averaging to 50/50 despite individual trajectories never switching. The LINN explicitly forces exploration of both lobes ($\Diamond\text{visits\_lobe}$). This perfectly pinpoints what the symbolic component buys us: LINN fundamentally solves the structural collapse. SDE+LINN achieves the best $\Box$ MAE (0.088) and $\Diamond$ MAE (0.070). Visually (Fig.~\ref{fig:cs2_attractor}), it is the only model recovering the true two-lobe geometry. Adding LINN to any base model acts as a powerful regularizer, reducing divergence (Fig.~\ref{fig:cs2_regularizer}).

\begin{table}[t]
\centering
\caption{Case Study~2 results (Lorenz-63).  $\dagger$: quantifier collapse ($\Box = \Diamond$).}
\label{tab:cs2_results}
\resizebox{\linewidth}{!}{
\begin{tabular}{@{}lrrrrrrr@{}}
\toprule
\textbf{Model} & \textbf{1-Step} & \textbf{Multi-S} & $\Box$ \textbf{MAE} & $\Diamond$ \textbf{MAE} & $\Delta_Q$ \textbf{MAE} & \textbf{Lobe Err} & \textbf{Esc.\%} \\ \midrule
Neural ODE & 0.013 & 132.2 & 0.133$^\dagger$ & 0.063$^\dagger$ & 0.124$^\dagger$ & 0.012 & 0 \\
Pure MSE SDE & 0.015 & 68.7 & 0.112 & 0.074 & 0.121 & 0.006 & 0 \\
PINN (det.) & 0.814 & 16084 & 0.400$^\dagger$ & 0.478$^\dagger$ & 0.124$^\dagger$ & 0.010 & 100 \\
PINN + noise & 0.816 & 12523 & 0.400 & 0.477 & 0.123 & 0.001 & 100 \\
Ensemble ODE & 0.013 & 134.4 & 0.119 & 0.071 & 0.119 & 0.016 & 0 \\
ODE + LINN & 0.064 & 111.3 & 0.091 & 0.079 & 0.130 & 0.771 & 0 \\
\textbf{SDE + LINN} & \textbf{0.023} & \textbf{79.6} & \textbf{0.088} & \textbf{0.070} & \textbf{0.109} & 0.313 & \textbf{0} \\
PINN + LINN & 0.037 & 83.6 & 0.204 & 0.121 & 0.171 & 0.203 & 15 \\
\bottomrule
\end{tabular}}
\end{table}

\subsection{Case Study 3: Deontic Logic, Safe Confinement Dynamics}
\label{sec:exp:deontic}

We simulate a Tokamak-inspired 2D arena (poloidal cross-section) where a particle must remain confined within a specific radius. This tests whether safe control can be synthesized entirely from formal logical specifications rather than hand-crafted rewards. Two SDEs model distinct dynamics: the temporal SDE represents unconstrained physics (drifting outward and exiting), while the deontic SDE is trained to maximize $L_{\Box\text{safe}}$.

The environment's ``physics'' is embedded as a fixed Grad--Shafranov-inspired base drift, applying a toroidal swirl and radial pressure gradient that inherently pushes particles outward. The deontic SDE takes this physics as its base and learns an additive restoring force (via an MLP) that counteracts the outward pressure. By directly using $O(\Box_{[0,T]}\text{safe})$ as the training objective, the deontic SDE is effectively trained as a LINN, synthesizing safe dynamics entirely from formal constraints.

\begin{figure}[!t]
\vspace{-1em}
\centering
\includegraphics[width=0.8\linewidth]{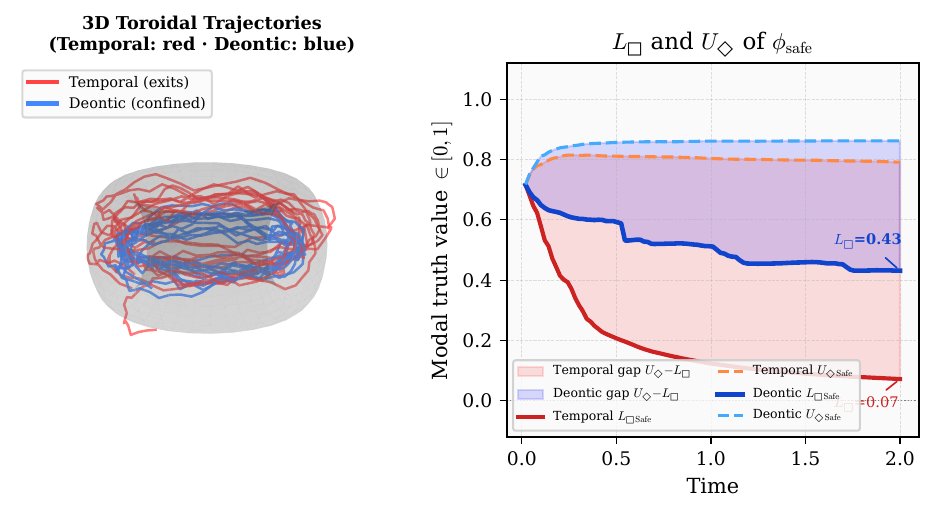}
\vspace{-0.5em}
\caption{Case Study~3: Deontic safe confinement.
Left: 3D toroidal trajectories (poloidal cross-section wrapped onto torus geometry).
Red paths show the temporal SDE (unconstrained physics, drifting outward and exiting the vessel).
Blue paths show the deontic SDE (trained on $O(\Box_{[0,T]}\text{safe})$, staying confined with 0\% exit).
The confinement structure emerges from the logical objective alone, without a hand-crafted control law.
Right: $L_\Box$ and $U_\Diamond$ of $\phi_\text{safe}$ over time.
The deontic SDE (blue) maintains $L_{\Box\text{safe}} = 0.456$ (6.1$\times$ above temporal 0.075); the shaded quantifier gap $U_\Diamond - L_\Box > 0$ confirms non-collapsed quantifiers in both SDEs.
Drift magnitude heatmaps showing the learned restoring force appear in Appendix~\ref{sec:appendix:experiments}.}
\label{fig:cs6_deontic}
\vspace{-1.5em}
\end{figure}

\paragraph{Results.}
Table~\ref{tab:cs6_results} summarises the key metrics.  The temporal SDE (natural dynamics) has $L_{\Box\text{safe}} = 0.075$ with 61.7\% of paths exiting the vessel; the deontic SDE (trained to maximise $O(\Box\text{safe})$) achieves $L_{\Box\text{safe}} = 0.456$ with 0\% exit probability, a $6.1\times$ improvement (Fig.~\ref{fig:cs6_deontic}).  The quantifier gap ($U_{\Diamond} - L_{\Box}$) is 0.72 for temporal vs.\ 0.40 for deontic, confirming non-collapsed quantifiers in both SDEs.  Physical consistency tests confirm the learned deontic drift is inward in the scrape-off layer (100\% of samples), with magnitude increasing with radius (correlation 0.84) and generalising to mid-radius regions ($r \in [0.5, 0.7]$) not explicitly trained on (visible as the bright boundary ring in the drift heatmaps; see Fig.~\ref{fig:cs6_heatmap} in Appendix~\ref{sec:appendix:experiments}).  A trivial radial baseline ($f = -k\hat{r}$ for $r > 0.6$) achieves only $L_{\Box} = 0.148$ and 24.2\% exit rate, confirming the learned drift captures non-trivial spatial structure.

\begin{table}[t]
\centering
\caption{Case Study~3 results (deontic confinement).}
\label{tab:cs6_results}
\small
\begin{tabular}{@{}lcccc@{}}
\toprule
\textbf{SDE} & $L_{\Box\text{safe}}$ & $U_{\Diamond\text{safe}}$ & \textbf{Exit \%} & $\Delta_Q$ \\ \midrule
Temporal & 0.075 & 0.797 & 61.7 & 0.722 \\
\textbf{Deontic} & \textbf{0.456} & \textbf{0.857} & \textbf{0.0} & 0.401 \\
Trivial baseline & 0.148 & --- & 24.2 & --- \\
\bottomrule
\end{tabular}
\end{table}

\vspace{-0.85em}\section{Conclusion}
\label{sec:conclusion}

We introduced Continuous Modal Logical Neural Networks (CMLNNs) and the Fluid Logic paradigm: each modal operator type is backed by a dedicated Neural SDE, with the diffusion coefficient governing the semantic distinction between necessity and possibility.  The framework provides sound bounds with respect to entropic risk-based semantics (Theorem~\ref{thm:soundness}), resolves quantifier collapse (Theorem~\ref{thm:quantifier_noncollapse}), and identifies structural analogues between SDE properties and modal axioms.  A key instantiation of this framework is Logic-Informed Neural Networks (LINNs), which allow modal logical formulas to act as training objectives, enabling the recovery of structural properties that local dynamics alone cannot enforce. Furthermore, the epistemic/doxastic distinction is captured by a single design choice (SDE initialization), and the deontic SDE provides a logic-structured specification layer for safe control.

In future, promising directions include extending Fluid Logic to other modalities, such as strategic or coalition logics, for reasoning about complex multi-agent systems. Future research may also focus on improving the $O(N_{\mathrm{mc}}^D)$ scalability of deep formula nesting and developing variance reduction techniques to tighten the probabilistic Monte Carlo bounds.

\label{lastpage}
\bibliography{references}

\clearpage
\appendix

\vspace{-0.85em}\section{Operator Definitions (Population Form)}
\label{sec:appendix:population_ops}

\subsection{Necessity ($\Box$) vs.\ Possibility ($\Diamond$), Visual Proof}

\begin{figure}[h]
\centering
\begin{tikzpicture}[font=\sffamily\small, >=Stealth,
  pathstyle/.style={thick, rounded corners=2mm},
  worldpt/.style={circle, fill, inner sep=2pt},
  valbox/.style={rectangle, rounded corners=1pt, draw=gray!60, fill=gray!8,
                 minimum width=1.0cm, minimum height=0.38cm, align=center, font=\scriptsize},
]
\node[font=\bfseries\sffamily, anchor=west] at (-0.3, 2.4) {$\Box_i\phi(w)$: soft \textbf{worst-case} over paths};

\node[worldpt, fill=teal!80] (w0L) at (0,0) {};
\node[below=2pt of w0L, font=\scriptsize] {$w$};

\draw[pathstyle, color=red!80,   opacity=0.85] (w0L.east) .. controls (0.8, 0.9) .. (2.6, 1.2)
      node[right, font=\scriptsize, text=red!80] {$g_1\!=\!0.2$};
\draw[pathstyle, color=orange!80, opacity=0.85] (w0L.east) .. controls (0.8, 0.3) .. (2.6, 0.5)
      node[right, font=\scriptsize, text=orange!80] {$g_2\!=\!0.5$};
\draw[pathstyle, color=green!60!black, opacity=0.85] (w0L.east) .. controls (0.8,-0.3) .. (2.6,-0.4)
      node[right, font=\scriptsize, text=green!60!black] {$g_3\!=\!0.8$};
\draw[pathstyle, color=blue!70, opacity=0.85] (w0L.east) .. controls (0.8,-0.9) .. (2.6,-1.1)
      node[right, font=\scriptsize, text=blue!70] {$g_4\!=\!0.9$};

\draw[pathstyle, color=red!80, line width=2.5pt, opacity=0.4]
      (w0L.east) .. controls (0.8, 0.9) .. (2.6, 1.2);
\node[draw=red!70, fill=red!10, rounded corners=2pt, font=\scriptsize, inner sep=3pt]
     at (1.3, 1.55) {worst-case path};

\node[valbox, fill=red!10, draw=red!60] at (1.3, -1.65)
     {$L_{\Box\phi}(w) \approx g_1 = 0.2$};

\begin{scope}[shift={(6.2,0)}]
\node[font=\bfseries\sffamily, anchor=west] at (-0.3, 2.4) {$\Diamond_i\phi(w)$: soft \textbf{best-case} over paths};

\node[worldpt, fill=teal!80] (w0R) at (0,0) {};
\node[below=2pt of w0R, font=\scriptsize] {$w$};

\draw[pathstyle, color=red!80,   opacity=0.85] (w0R.east) .. controls (0.8, 0.9) .. (2.6, 1.2)
      node[right, font=\scriptsize, text=red!80] {$h_1\!=\!0.2$};
\draw[pathstyle, color=orange!80, opacity=0.85] (w0R.east) .. controls (0.8, 0.3) .. (2.6, 0.5)
      node[right, font=\scriptsize, text=orange!80] {$h_2\!=\!0.5$};
\draw[pathstyle, color=green!60!black, opacity=0.85] (w0R.east) .. controls (0.8,-0.3) .. (2.6,-0.4)
      node[right, font=\scriptsize, text=green!60!black] {$h_3\!=\!0.8$};
\draw[pathstyle, color=blue!70, opacity=0.85] (w0R.east) .. controls (0.8,-0.9) .. (2.6,-1.1)
      node[right, font=\scriptsize, text=blue!70] {$h_4\!=\!0.9$};

\draw[pathstyle, color=blue!70, line width=2.5pt, opacity=0.4]
      (w0R.east) .. controls (0.8,-0.9) .. (2.6,-1.1);
\node[draw=blue!70, fill=blue!10, rounded corners=2pt, font=\scriptsize, inner sep=3pt]
     at (1.3, -1.65) {};
\node[draw=blue!70, fill=blue!10, rounded corners=2pt, font=\scriptsize, inner sep=3pt]
     at (1.3, 1.55) {best-case path};

\node[valbox, fill=blue!10, draw=blue!60] at (1.3, -1.65)
     {$U_{\Diamond\phi}(w) \approx h_4 = 0.9$};
\end{scope}

\draw[<->, thick, gray!60] (4.0, -1.65) -- (5.7, -1.65)
     node[midway, below, font=\scriptsize, text=gray] {gap $U_\Diamond - L_\Box = 0.7 > 0$};

\end{tikzpicture}
\caption{Quantifier non-collapse visualised.  From the same starting world $w$, four SDE sample paths fan out with different per-path truth scores $g_n$ (soft minima of $L_\phi$ along the path).
Left ($\Box$): necessity aggregates via soft min, so the worst-case path (red, $g=0.2$) dominates.
Right ($\Diamond$): possibility aggregates via soft max, so the best-case path (blue, $h=0.9$) dominates.
Because the SDE produces a distribution of paths, $L_\Box < U_\Diamond$ (the gap is $0.7$ here).  For a deterministic ODE, all four paths would coincide and $L_\Box = U_\Diamond$ (quantifier collapse).}
\label{fig:box_diamond_visual}
\end{figure}
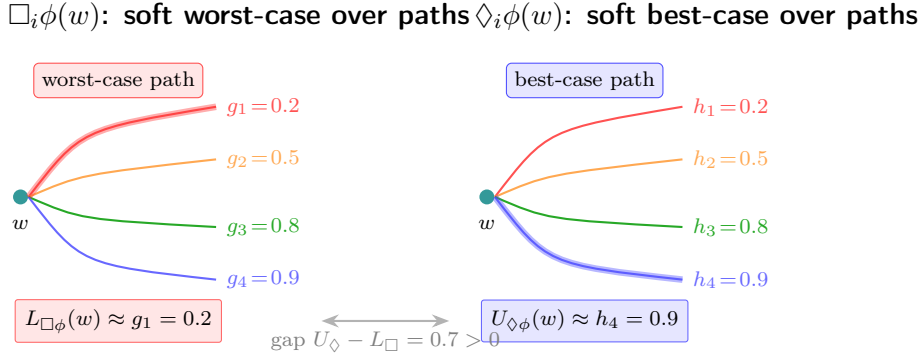

\subsection{Population-Level Operators}

\begin{definition}[Population Entropic Risk Functional]
\label{def:entropic_risk}
For a random variable $X(\omega)$ on $(\Omega, \mathcal{F}, \mathbb{P})$, the population softmin at temperature $\tau > 0$ is:
$$\mathrm{softmin}_{\tau,\,\omega \sim \mathbb{P}}(X(\omega)) := -\tau \log \mathbb{E}_{\mathbb{P}}\!\left[\exp\!\left(-\frac{X(\omega)}{\tau}\right)\right]$$
Monotone, satisfies $\operatorname{ess\,inf} X \leq \mathrm{softmin}_\tau(X) \leq \mathbb{E}[X]$, converges to $\operatorname{ess\,inf} X$ as $\tau \to 0^+$.
\end{definition}

\begin{definition}[Population Necessity and Possibility Operators]
$$L^{\mathrm{pop}}_{\Box_i\phi}(w) = \mathrm{softmin}_{\tau_\omega,\,\omega \sim \mathbb{P}} \!\left( -\tau_s \log \int_0^S \exp\!\left( -\frac{L_\phi(\Phi^{(i),s}_\theta(w; \omega))}{\tau_s} \right) ds \right)$$
The MC operators (Eqs.~\ref{eq:per_path_score}--\ref{eq:stoch_possibility}) replace expectations with empirical means over $N_{\mathrm{mc}}$ paths and integrals with sums over $K$ quadrature points.
\end{definition}

\vspace{-0.85em}\section{Additional Theorem Statements}
\label{sec:appendix:additional_theorems}

\begin{theorem}[Convergence]
\label{thm:convergence}
For bounded drift/diffusion, compact $\mathcal{W}$, Lipschitz valuations, and finite formula set, the CMLNN inference algorithm converges to a fixed point a.s., provided $\tau_s, \tau_\omega > 0$.
\end{theorem}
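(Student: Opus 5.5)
The inference algorithm is not spelled out in the statement, so I will fix an interpretation and argue against it: the CMLNN inference step is an upward–downward bound-propagation map $\mathcal{T}$ acting on the vector of interval bounds $\{[L_\phi(w),U_\phi(w)]\}_{\phi\in\Gamma}$. Here $\Gamma$ is the finite formula set and $w$ ranges over the evaluated worlds. Upward passes compute $L,U$ of composite formulas from their subformulas via the soft connectives and the modal operators (Eqs.~\ref{eq:per_path_score}--\ref{eq:stoch_possibility}). Downward passes tighten subformula bounds. Bounds are updated monotonically, with lower bounds taking a max with the previous value and upper bounds a min.

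The argument then has two branches.

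\textbf{Bounded monotone branch.} First I show every iterate lies in $[-B,B]$. Lipschitz valuations on compact $\mathcal{W}$ give bounded atoms, and each soft operator is a log-sum-exp average of bounded quantities, so it stays within the range of its inputs. Boundedness of the drift and diffusion ensures the sample paths, and hence $\Phi^{(i),s}_\theta(w;\omega)$, stay in the domain where the valuations are controlled; this may require stopping or reflecting at $\partial\mathcal{W}$, or clipping. With the tightening monotone, $L^{(t)}$ is nondecreasing and bounded above, and $U^{(t)}$ is nonincreasing and bounded below. Each coordinate therefore converges for every fixed realization of the Monte Carlo paths. Because this holds for every realization, the convergence holds almost surely, where the randomness is the frozen Brownian samples $\{\omega_n\}$.

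\textbf{Continuity of the limit.} Next I show the limit is a fixed point of $\mathcal{T}$. For $\tau_s,\tau_\omega>0$, log-sum-exp is smooth and $1$-Lipschitz in $\ell_\infty$. Composition with the Lipschitz valuations and the Lipschitz soft connectives gives a continuous map $\mathcal{T}$. Passing to the limit in $x^{(t+1)}=\mathcal{T}(x^{(t)})$ then yields $x^\ast=\mathcal{T}(x^\ast)$. This is where $\tau>0$ is used. At $\tau=0$ the operators are hard min/max, which are still continuous, but the Monte Carlo ess-inf estimate degenerates. I would note that the positive-temperature hypothesis chiefly guarantees smoothness, which is also needed for the gradient-based outer loop.

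\textbf{Contraction branch.} If the algorithm instead uses non-monotone (Jacobi-style) updates without max/min tightening, I switch to a contraction argument. I would bound the Lipschitz constant of $\mathcal{T}$ in $\ell_\infty$: softmin and softmax are nonexpansive, and connectives such as Łukasiewicz are nonexpansive. Because the formula set is finite and acyclic in its subformula order, the upward pass stabilizes after depth-many iterations. The downward propagation is the only place where cycles arise. There I would rely on monotone tightening to obtain a Knaster–Tarski or Kleene fixed point on the complete lattice of intervals in $[-B,B]$.

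\textbf{Main obstacle.} The hardest step is the downward/cyclic propagation: establishing that updates are order-preserving (or nonexpansive) when they invert soft modal operators. Inverting $\Box_i$ to bound $L_\phi$ along paths is not canonical. I would first try to define downward updates only by monotone clipping, $L\leftarrow\max(L,\cdot)$ and $U\leftarrow\min(U,\cdot)$, which makes the lattice argument go through regardless of their exact form. Finally, contradictory bounds ($L>U$) do not break convergence; they converge too and are handled by $\mathcal{L}_{\text{contra}}$.
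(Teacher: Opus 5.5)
The paper states Theorem~\ref{thm:convergence} without proof. Appendix~\ref{sec:appendix:proofs} covers only Theorems~\ref{thm:soundness}, \ref{thm:quantifier_noncollapse} and Proposition~\ref{prop:sample_complexity}, and the paper never specifies the inference algorithm, so there is nothing to match against. Your LNN-style reconstruction is reasonable: upward--downward sweeps with $L\leftarrow\max(L,\cdot)$ and $U\leftarrow\min(U,\cdot)$, frozen sample paths, and finitely many evaluated worlds. Its first half is fine: the iterates form a chain in $[-B,B]$ and converge for every realization (surely, not merely a.s.).

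The gap is in showing the limit is a fixed point. Clipping only makes $\mathcal{T}$ inflationary; it does not make the lattice argument work ``regardless of the exact form'' of the downward rules. Knaster--Tarski gives existence of some fixed point, not that your chain reaches one, and Kleene needs $\mathcal{T}$ to preserve suprema of increasing chains. For instance, take the order-preserving $F(x)=(x+1)/2$ for $x<1$ and $F(x)=2$ for $x\ge 1$. The iterates of $x\mapsto\max(x,F(x))$ from $0$ are $1-2^{-t}\to 1$, yet $\max(1,F(1))=2$. Your continuity paragraph covers only the upward maps. The sound downward rules for the paper's STL-style connectives are case splits, e.g.\ $\min(a,b)\le u$ and $b\ge L_b$ yield $a\le u$ only when $L_b>u$. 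Such rules are discontinuous, so continuity of $\mathcal{T}$ fails. You must instead verify that each rule preserves suprema of chains. Here it does, because the test is strict (an open condition), but that is a property of the specific rules, not something clipping provides.

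Your argument never actually uses the hypothesis $\tau_s,\tau_\omega>0$ (you concede hard min/max are continuous), yet that hypothesis is what makes this step clean for the modal operators. The normalized log-sum-exp is strictly increasing in every argument, so its partial inverses give continuous, order-preserving downward bounds:
\begin{itemize}
\item $\mathrm{softmin}_\tau(x)\ge\ell$ forces $x_n\ge\ell-\tau\log N$ for every $n$.
\item $\mathrm{softmin}_\tau(x)\le u$ together with $x_m\ge L_m$ for $m\neq n$ forces $x_n\le-\tau\log\big(Ne^{-u/\tau}-\sum_{m\neq n}e^{-L_m/\tau}\big)$ whenever the argument is positive.
\end{itemize}
The second bound tends to $+\infty$ as the argument decreases to $0$, so after clipping at $B$ it is continuous. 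At $\tau=0$ it degenerates into a case split of the kind above. Apply this at both levels ($\tau_s$ over the $K$ grid points, $\tau_\omega$ over the $N_{\mathrm{mc}}$ paths) and check chain-continuity of the propositional rules; this gives $\mathcal{T}(x^\ast)=x^\ast$ and closes the argument.

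The contraction branch should be dropped, for three reasons. Nonexpansiveness is not contraction: the Jacobi swap $x\leftarrow y$, $y\leftarrow x$ is nonexpansive and oscillates. \L{}ukasiewicz $\max(0,a+b-1)$ is $2$-Lipschitz in $\ell_\infty$, not nonexpansive. And the branch falls back on monotone tightening anyway.
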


\begin{theorem}[Universal Approximation of Transition Laws]
\label{thm:expressivity}
For compact $\mathcal{W}$ and any target transition law admitting a Lipschitz It\^o diffusion representation, $\exists$ a Neural SDE with $\sup_w W_1(\text{Law}(\Phi^S_\theta(w; \cdot)), \mathcal{R}(w, \cdot)) < \epsilon$.
\end{theorem}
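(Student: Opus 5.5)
Since the target law $\mathcal{R}(w,\cdot)$ is, by assumption, the time-$S$ marginal of an It\^o diffusion $dz = f^\star(z,s)\,ds + \sigma^\star(z,s)\,dW_s$ with Lipschitz coefficients, the strategy is a stability-plus-approximation argument. First we approximate $f^\star,\sigma^\star$ uniformly on a compact set by neural networks $f_\theta,\sigma_\theta$ via the classical universal approximation theorem. Then we use the strong (pathwise) stability of SDE solutions to transfer this uniform coefficient error into a bound on $W_1$ between the time-$S$ laws, uniformly in the starting point $w\in\mathcal{W}$. The coupling is synchronous: both SDEs are driven by the same Brownian motion from the same $w$. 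This gives $W_1(\mathrm{Law}(\Phi^S_\theta(w)),\mathcal{R}(w,\cdot)) \le \mathbb{E}|\Phi^S_\theta(w)-Z^\star_S(w)| \le (\mathbb{E}|\cdot|^2)^{1/2}$, so it suffices to control the $L^2$ difference of the two strong solutions.

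The key steps, in order, are as follows.

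(1) \emph{Localization.} The solutions started in compact $\mathcal{W}$ stay, with probability $\ge 1-\delta$, in a large ball $\mathcal{K}$ over $[0,S]$. This follows from linear growth (implied by Lipschitz) and the standard moment bound $\mathbb{E}\sup_{s\le S}|Z_s|^2 \le C(1+|w|^2)$. Alternatively, we can assume the paper's standing hypothesis of bounded drift/diffusion.

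(2) \emph{Approximation.} We pick networks with $\sup_{\mathcal{K}\times[0,S]}(|f_\theta-f^\star|+\|\sigma_\theta-\sigma^\star\|)\le\eta$. To keep the Neural SDE itself well posed with a controlled Lipschitz constant, we extend them outside $\mathcal{K}$ by composing with a smooth clamp/projection onto $\mathcal{K}$. We also choose architectures (e.g.\ smooth activations, or approximants of a mollified $f^\star$ whose derivatives are also approximated) so that $\mathrm{Lip}(f_\theta),\mathrm{Lip}(\sigma_\theta)\le L'$ with $L'$ independent of $\eta$.

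(3) \emph{Gronwall estimate.} With $e_s = Z^\theta_s - Z^\star_s$, It\^o's isometry and the Lipschitz bound on the true coefficients give $\mathbb{E}|e_{t\wedge\tau_{\mathcal{K}}}|^2 \le 2(S+1)\int_0^t (L^2\,\mathbb{E}|e_{s}|^2 + \eta^2)\,ds$. Gronwall then yields $\mathbb{E}|e_{S\wedge\tau}|^2\le C_S\eta^2$, uniformly in $w$.

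(4) \emph{Combining.} The contribution from the exit event $\{\tau_{\mathcal{K}}<S\}$ is handled by Cauchy--Schwarz together with the second-moment bounds. This gives $\sup_w W_1\le C_S\eta + C\sqrt{\delta}$. Finally we choose $\mathcal{K}$ (hence $\delta$) and then $\eta$ so that this total is $<\epsilon$.

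The main obstacle is step (2). Universal approximation gives uniform closeness only on compacts and says nothing about the Lipschitz constant of the network, yet the Neural SDE must be globally well posed for the strong solution to exist. My first attempt is to apply the Gronwall argument using only the Lipschitz property of the \emph{target} coefficients: writing $f_\theta(Z^\theta)-f^\star(Z^\star) = (f_\theta-f^\star)(Z^\theta) + (f^\star(Z^\theta)-f^\star(Z^\star))$ avoids needing Lipschitz control of $f_\theta$ in the estimate. Well-posedness of the Neural SDE then only requires $f_\theta,\sigma_\theta$ to be locally Lipschitz (true for smooth activations) and clamped outside $\mathcal{K}$ to prevent blow-up, which is the route I would pursue.
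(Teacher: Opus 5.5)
The paper gives no proof of Theorem~\ref{thm:expressivity}: Appendix~\ref{sec:appendix:proofs} covers only soundness, non-collapse and sample complexity, so there is no argument to compare with. Elsewhere the paper uses $\mathcal{R}$ for the accessibility diagnostic, which is not a probability law. Your reading of $\mathcal{R}(w,\cdot)$ as the time-$S$ marginal of the target diffusion started at $w$ is therefore the interpretation under which the statement makes sense. Under that reading, your argument (synchronous coupling, coefficient approximation, Gronwall) is the standard one and it works. It even gives a pathwise $L^2$ bound, which is stronger than $W_1$.

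One step must be stated for the right process. The term $(f_\theta-f^\star)(Z^\theta_s)$ is at most $\eta$ only while $Z^\theta_s\in\mathcal{K}$. So $\tau_{\mathcal{K}}$ must be (at least) the exit time of the \emph{neural} solution. You then need both $\mathbb{P}(\tau_{\mathcal{K}}<S)\le\delta$ and the second-moment bound in your Cauchy--Schwarz step for $Z^\theta$, uniformly in $\eta$. Step (1) as written derives this only from the target's linear growth. Bounding $Z^\theta$ through the supremum of the clamped coefficients on $\mathcal{K}$ would not fix it, because that supremum grows like the radius and the exit bound then does not shrink. What does work is a norm-nonincreasing clamp, e.g.\ radial projection onto a ball centred at the origin or coordinatewise clamping onto a centred box. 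Then $|f_\theta(z,s)|\le\eta+|f^\star(0,s)|+L|z|$, and likewise for $\sigma_\theta$. The neural SDE thus inherits the target's linear-growth constant independently of $\eta$ and of the radius, and the same moment bound controls its exit probability and its second moments.

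The ``main obstacle'' you name is not actually one. As your last paragraph observes, the error decomposition uses only $\mathrm{Lip}(f^\star)$ and $\mathrm{Lip}(\sigma^\star)$, so you can drop the $\eta$-independent bound on $\mathrm{Lip}(f_\theta)$ demanded in step (2). Well-posedness is also automatic: an MLP with Lipschitz activations (ReLU, tanh, softplus) is globally Lipschitz. Its constant may be large, but it is finite and never enters the estimate. The clamp is needed only for growth control, and it stays inside the network class, since $\mathrm{clamp}(x,-R,R)=\mathrm{ReLU}(x+R)-\mathrm{ReLU}(x-R)-R$.

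Finally, make three things explicit:
\begin{itemize}
\item Uniform approximation on $\mathcal{K}\times[0,S]$ needs the coefficients to be continuous in $s$. Otherwise approximate in $L^2(ds)$, which the Gronwall step tolerates equally well.
\item The neural SDE must be driven by a Brownian motion of the same dimension as the target for the synchronous coupling.
\item Choosing the radius (hence $\delta$) before $\eta$ is legitimate because your Gronwall constant depends only on $L$ and $S$.
\end{itemize}
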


\begin{theorem}[Wasserstein Universal Approximation]
\label{thm:wasserstein_approx}
Under the same conditions with bounded $p$-th moments, the approximation strengthens to $W_p$.
\end{theorem}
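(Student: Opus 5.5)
The plan is to upgrade the $W_1$ statement of Theorem~\ref{thm:expressivity} to $W_p$ using a uniform bound on moments of order strictly greater than $p$. The upgrade itself is a uniform-integrability interpolation. The standard fact is that if $\mu_n \to \mu$ weakly and the $p$-th moments are uniformly integrable, then $W_p(\mu_n,\mu)\to 0$. To keep the result quantitative and uniform in $w$, I will use a coupling bound instead of a compactness argument.

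\textbf{Step 1 (moment control).} Fix the target Lipschitz It\^o diffusion with coefficients $(b,\Sigma)$. Take Neural SDE coefficients $(f_\theta,\sigma_\theta)$ with bounded Lipschitz constants. Standard It\^o estimates (BDG plus Gronwall) then give
\[
\sup_{w\in\mathcal{W}} \mathbb{E}\,|\Phi^S_\theta(w)|^{q} \le M_q < \infty ,
\]
uniformly over the approximating family, for some $q>p$. Compactness of $\mathcal{W}$ makes this bound uniform in $w$, and the same estimate holds for the target. I read the hypothesis ``bounded $p$-th moments'' as providing this with some margin $q>p$. If only $q=p$ is available, I would instead use uniform integrability of $|\cdot|^p$, which follows from the uniform Lipschitz constants via the same estimates.

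\textbf{Step 2 (synchronous coupling).} Rather than passing through $W_1$, I will directly couple the two SDEs by driving them with the same Brownian motion from the same $w$. It\^o's formula applied to $|X_s-Y_s|^p$, together with BDG and Gronwall, gives
\[
\mathbb{E}\,|X_S - Y_S|^p \;\le\; C_{p,S,\mathrm{Lip}} \Big( \|f_\theta - b\|_{\infty,K}^p + \|\sigma_\theta - \Sigma\|_{\infty,K}^p \Big) + \text{tail terms}.
\]
Here $K$ is a large compact set on which the networks approximate $(b,\Sigma)$ uniformly, which the universal approximation theorem for MLPs provides. The tail terms account for paths that leave $K$. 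They are controlled by Markov's inequality together with the $q$-th moments from Step~1:
\[
\mathbb{P}\big(\sup_s |X_s| > R\big) \le M_q / R^{q}.
\]
Since $W_p^p$ is at most $\mathbb{E}|X_S-Y_S|^p$, choosing $R$ first and then the network accuracy makes $\sup_w W_p < \epsilon$.

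\textbf{Main obstacle.} The tail/localization step is the hardest part. Universal approximation of the coefficients holds only on compacts, while the $p$-th power of the path discrepancy must be integrated over all of $\mathbb{R}^d$. I would handle it with a stopping time $\tau_R$, the first exit from the ball of radius $R$. Before $\tau_R$, I run Gronwall up to $S\wedge\tau_R$. After $\tau_R$, I bound the contribution by H\"older's inequality,
\[
\mathbb{E}\big[|X_S-Y_S|^p\,\mathbf{1}_{\tau_R<S}\big] \le \big(\mathbb{E}|X_S-Y_S|^q\big)^{p/q}\,\mathbb{P}(\tau_R<S)^{1-p/q}.
\]
This is exactly where the moment assumption with margin $q>p$ is needed. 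It must hold uniformly over the Lipschitz-constrained network class, which is why the class must be chosen with uniformly bounded Lipschitz constants. One can also clip the networks outside $K$ to enforce this.
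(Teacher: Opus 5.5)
\textbf{Comparison with the paper.} The paper gives no proof of this statement. Its appendix proves only Theorem~\ref{thm:soundness}, Theorem~\ref{thm:quantifier_noncollapse} and Proposition~\ref{prop:sample_complexity}, and Theorem~\ref{thm:expressivity}, which this result claims to strengthen, is itself unproved. The only comparison available is with the route the wording suggests: an upgrade of the $W_1$ bound via moments. For example, for a $W_1$-optimal coupling $\pi$, $W_p^p\le R^{p-1}W_1+\mathbb{E}_\pi\bigl[|X-Y|^p\mathbf{1}_{\{|X-Y|>R\}}\bigr]$. (If ``compact $\mathcal{W}$'' meant that paths never leave $\mathcal{W}$, this reduces to $W_p^p\le\operatorname{diam}(\mathcal{W})^{p-1}W_1$.)

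Read literally, with only bounded $p$-th moments, that upgrade is false. For $p>1$, take $\mu_n=(1-\tfrac1n)\delta_0+\tfrac1n\delta_{n^{1/p}}$: its $p$-th moment is $1$ and $W_1(\mu_n,\delta_0)\to0$, yet $W_p(\mu_n,\delta_0)=1$. So your insistence on a margin $q>p$ is the right instinct.

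Your synchronous coupling is a genuinely different route, with three advantages:
\begin{itemize}
\item It is self-contained. It does not use the unproved $W_1$ theorem, and in fact implies it, since $W_1\le W_p$.
\item It gives an explicit rate in the coefficient accuracy.
\item It shows the added moment hypothesis is redundant. Lipschitz coefficients, a finite horizon and a compact set of initial worlds give moments of every order uniformly in $w$, for the target and, once growth is controlled as below, for the networks.
\end{itemize}
What the interpolation route buys instead is modularity: it applies to any $W_1$-approximation scheme once uniform $q$-moments are known.

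\textbf{The step that does not go through.} You appeal to a network class with uniformly bounded Lipschitz constants. The sup-norm universal approximation theorem you cite gives accuracy on $K$ but no Lipschitz control. Such control can be obtained from $C^1$ approximation of mollified coefficients with a smooth activation, but that is a stronger theorem. Clipping does not supply it either:
\begin{itemize}
\item Precomposing with a projection onto $K$ makes the Lipschitz constant global, but leaves it equal to the uncontrolled constant on $K$.
\item Capping the outputs at $\sup_K|b|+\delta$ yields only an $R$-dependent moment bound $M_q(R)\sim R^q$. Then $M_q(R)/R^q$ need not vanish, and your ``choose $R$ first'' ordering breaks.
\end{itemize}

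\textbf{A fix that avoids Lipschitz control.} Split the drift error as $[b(X)-b(Y)]+[b(Y)-f_\theta(Y)]$, and the diffusion error likewise, so that Gronwall sees only the target constants $L_b,L_\Sigma$. Take $f_\theta=\mathrm{MLP}\circ\pi_R$, where $\pi_R$ clamps each coordinate to $[-R,R]$; this map is ReLU-expressible and $1$-Lipschitz, with $|\pi_R(z)|\le|z|$. Require $\sup_{[-R,R]^d}|\mathrm{MLP}-b|\le\delta\le1$. Then:
\begin{itemize}
\item $|f_\theta(z)|\le|b(0)|+1+L_b|z|$, a linear-growth bound independent of $R$ and $\delta$. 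This, not a Lipschitz bound, is what Step~1 needs for moments of $Y$ that are uniform over the class.
\item Everywhere,
\[
|b(Y)-f_\theta(Y)|\le\delta+L_b\,|Y-\pi_R(Y)|\le\delta+L_b\,|Y|\,\mathbf{1}_{\{\|Y\|_\infty>R\}}.
\]
\item Since $|y|^p\le R^{p-q}|y|^q$ when $|y|>R$, we get $\mathbb{E}\bigl[|Y_s|^p\mathbf{1}_{\{\|Y_s\|_\infty>R\}}\bigr]\le M_qR^{p-q}$.
\end{itemize}
You can therefore drop the stopping time and the post-$\tau_R$ H\"older step, and run Gronwall on the unstopped difference. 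This gives $\sup_{w}W_p^p\le C(L_b,L_\Sigma,p,S)\,(\delta^p+M_qR^{p-q})$. The network's own Lipschitz constant then matters only for well-posedness.

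\textbf{Two smaller repairs.} If you keep the stopping-time version, Step~1 must bound $\mathbb{E}\sup_{s\le S}|\Phi^s_\theta(w)|^q$ rather than only the terminal moment, because Step~2 uses $\mathbb{P}(\sup_s|X_s|>R)$; BDG gives this. Also, It\^o's formula for $|x|^p$, like the BDG-plus-Jensen estimate, needs $p\ge2$. This costs nothing, since $W_p\le W_{p'}$ for $p\le p'$.
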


\begin{proposition}[Sample Complexity]
\label{prop:sample_complexity}
Under $B \leq C\tau_\omega$, $\mathbb{P}( | \hat{L}_{\Box_i\phi} - L^{\mathrm{pop}}_{\Box_i\phi} | \geq \epsilon ) \leq 2 \exp\!\big(-c(C)\, N_{\mathrm{mc}} \epsilon^2 / \tau_\omega^2\big)$ for some $c(C) > 0$ (explicit in Appendix~\ref{sec:appendix:proofs}).  For error $\leq \epsilon$ with confidence $1-\delta$: $N_{\mathrm{mc}} \geq \frac{\tau_\omega^2}{c(C)\,\epsilon^2} \log(2/\delta)$.  Without this assumption, the bound includes an $e^{\Theta(B/\tau_\omega)}$ factor (see proof).
\end{proposition}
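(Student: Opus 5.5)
The plan is to reduce the statement to a concentration bound for an empirical mean of bounded i.i.d. variables, and then push that bound through the logarithm. Write $Y_n = \exp(-g_n/\tau_\omega)$, where the $g_n$ are the per-path scores from Eq.~\ref{eq:per_path_score} evaluated on i.i.d. sample paths $\omega_n$. The estimator is $\hat{L}_{\Box_i\phi} = -\tau_\omega \log \bar{Y}$ with $\bar Y = \frac{1}{N_{\mathrm{mc}}}\sum_n Y_n$. The population quantity is $L^{\mathrm{pop}}_{\Box_i\phi} = -\tau_\omega\log \mu$, where $\mu = \mathbb{E}[Y_1]$. Here we take the population operator with the same $K$-point inner quadrature, so that only the path expectation is replaced by an empirical mean. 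Any quadrature error is deterministic and is treated separately, consistent with part~(a) of Theorem~\ref{thm:soundness}.

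\textbf{Step 1: Hoeffding for $\bar Y$.} Since $g_n\in[-B,B]$, we have $Y_n\in[e^{-B/\tau_\omega}, e^{B/\tau_\omega}]$. Under $B\le C\tau_\omega$ this becomes $Y_n\in[e^{-C},e^{C}]$. Hoeffding's inequality then gives
\[
\mathbb{P}\bigl(|\bar Y-\mu|\ge t\bigr)\le 2\exp\!\Bigl(-\tfrac{2N_{\mathrm{mc}}t^2}{(e^{C}-e^{-C})^2}\Bigr).
\]

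\textbf{Step 2: transfer through the logarithm.} Both $\bar Y$ and $\mu$ lie in $[e^{-C},e^{C}]$, and on that interval $\log$ is Lipschitz with constant $e^{C}$. Hence
\[
|\hat L-L^{\mathrm{pop}}|=\tau_\omega|\log\bar Y-\log\mu|\le \tau_\omega e^{C}|\bar Y-\mu|.
\]
So the event $|\hat L - L^{\mathrm{pop}}|\ge\epsilon$ forces $|\bar Y-\mu|\ge \epsilon e^{-C}/\tau_\omega$. Substituting $t = \epsilon e^{-C}/\tau_\omega$ into Step~1 yields the claimed bound, with the explicit constant
\[
c(C)=\frac{2e^{-2C}}{(e^{C}-e^{-C})^2}=\frac{2}{(e^{2C}-1)^2}>0 .
\]
Setting the right-hand side equal to $\delta$ and solving gives $N_{\mathrm{mc}}\ge \frac{\tau_\omega^2}{c(C)\epsilon^2}\log(2/\delta)$.

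\textbf{Step 3: without $B\le C\tau_\omega$.} Repeat Steps~1 and~2 with $C$ replaced by $B/\tau_\omega$. The exponent becomes $N_{\mathrm{mc}}\epsilon^2/\tau_\omega^2$ times $2/(e^{2B/\tau_\omega}-1)^2 = e^{-\Theta(B/\tau_\omega)}$. This is exactly the $e^{\Theta(B/\tau_\omega)}$ factor in the required sample size, which diverges as $\tau_\omega\to0$ with $B$ fixed.

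The main obstacle is the log-transfer in Step~2. It needs a lower bound on $\bar Y$, which holds deterministically here because each $Y_n\ge e^{-C}$. Without that floor, the Lipschitz argument fails and the constant blows up exponentially. The other point needing care is the nested case, $D = 2$, where the inner $g_n$ are themselves MC estimates. I would handle this by conditioning on the inner samples, applying the bound to the outer level, and using a union bound. I would state this as a remark rather than part of the proposition.
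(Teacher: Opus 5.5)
Your proof is correct and follows the paper's argument. You apply Hoeffding to the i.i.d.\ variables $\exp(-g_n/\tau_\omega)\in[e^{-C},e^{C}]$ and then transfer through $x\mapsto-\tau_\omega\log x$ with Lipschitz constant $\tau_\omega e^{C}$, which gives exactly the paper's constant $c(C)=2/\big((e^{C}-e^{-C})^2e^{2C}\big)=2/(e^{2C}-1)^2$ and the $e^{-4B/\tau_\omega}$-type degradation when $C$ is replaced by $B/\tau_\omega$. Your choice to define $L^{\mathrm{pop}}_{\Box_i\phi}$ with the same $K$-point inner quadrature is a useful clarification that the paper leaves implicit: its population operator is stated with $\int_0^S$, but its proof silently uses the same $g$ on both sides.
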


\begin{theorem}[Differentiability]
\label{thm:diff_accessibility}
Under Lipschitz conditions with bounded derivatives, the stochastic accessibility $\mathcal{R}^{(i)}_\theta(w, w')$ is continuously differentiable in $\theta$ and $w$, computable via the stochastic adjoint.
\end{theorem}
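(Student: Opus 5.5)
The statement immediately above is the Differentiability theorem (Theorem~\ref{thm:diff_accessibility}). The plan is to reduce it to classical smooth dependence of SDE solutions on parameters and initial data. The adjoint formula then comes from the chain rule applied to that derivative process.

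Since the paper defines $\mathcal{R}^{(i)}_\theta(w,w')$ only informally as the degree to which $w'$ lies in the distribution of sample paths from $w$, I will first fix a concrete version. I take a smoothed reachability functional
\[
\mathcal{R}^{(i)}_\theta(w,w') = \int_0^S \mathbb{E}\bigl[k(\Phi^{(i),s}_\theta(w;\omega), w')\bigr]\,ds,
\]
where $k$ is a $C^1$ kernel with bounded derivatives, for example a Gaussian. Any similar choice built from expectations of $C^1_b$ test functions of the flow is handled the same way.

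\textbf{Step 1: differentiable flow.} I will invoke the Kunita-type theorem on stochastic flows. Suppose $f^{(i)}_\theta$ and $\sigma^{(i)}_\theta$ are $C^{1}$ in $(z,\theta)$, with bounded derivatives and Lipschitz constants uniform in $s$. This holds for MLPs with smooth activations on compact parameter sets. Then the solution map $(w,\theta)\mapsto \Phi^{(i),s}_\theta(w;\omega)$ admits a modification that is $C^1$ almost surely.

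Its Jacobians $J^w_s=\partial_w\Phi^s$ and $J^\theta_s=\partial_\theta\Phi^s$ solve the linearized (variational) SDEs
\[
dJ^w_s=\partial_z f\,J^w_s\,ds+\partial_z\sigma\,J^w_s\,dW_s,
\]
with the analogous equation for $J^\theta_s$, which carries the extra forcing terms $\partial_\theta f\,ds+\partial_\theta\sigma\,dW_s$.

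\textbf{Step 2: uniform moment bounds.} By Burkholder--Davis--Gundy and Gr\"onwall applied to the variational SDEs, I will show $\mathbb{E}\sup_{s\le S}\|J_s\|^p<\infty$ for every $p$, locally uniformly in $(w,\theta)$. The difference quotients are bounded in $L^2$ by the same kind of estimate.

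\textbf{Step 3: differentiate under the expectation and integral.} Since $\nabla k$ is bounded, $\nabla k\cdot J_s$ is dominated by an integrable envelope. Dominated convergence then lets me exchange the derivative with both $\mathbb{E}$ and $\int_0^S ds$. This gives
\[
\partial_\theta\mathcal{R}=\int_0^S\mathbb{E}\bigl[\nabla_z k(\Phi^s,w')\,J^\theta_s\bigr]\,ds,
\]
and the same formula in $w$. Dependence on $w'$ is smooth directly through $k$.

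\textbf{Step 4: continuity and the adjoint.} Continuity of these derivatives follows from $L^2$-continuity of $(w,\theta)\mapsto(\Phi^s,J_s)$, which is again a Gr\"onwall estimate on differences, combined with uniform integrability. For the last claim, I rewrite the pathwise derivative in backward form. The stochastic adjoint of \citet{li2020scalable} solves the backward SDE for $a_s=\partial \ell/\partial \Phi^s$ along the same Brownian path, and it returns exactly this pathwise gradient. The expected gradient is therefore computable as the average of adjoint gradients over paths.

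\textbf{Main obstacle.} I expect the hardest step to be Step 1 for the $\theta$-derivative when the diffusion $\sigma_\theta$ itself depends on $\theta$. There, continuity of the derivative needs Kolmogorov's continuity criterion for the random field in $(w,\theta)$, which in turn requires higher moment bounds on differences. I will get those from BDG with exponent $p>\dim(w)+\dim(\theta)$.

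A second subtlety arises if one instead defines $\mathcal{R}$ as a transition density. Then differentiability requires H\"ormander or uniform ellipticity conditions and Malliavin calculus, which go beyond the stated hypotheses. So I will state explicitly that the kernel-smoothed definition is the one being proved.
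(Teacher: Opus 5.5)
\textbf{Gap: you prove differentiability of a different functional from the paper's $\mathcal{R}$.} The paper states this theorem without proof, so there is no argument to compare routes with. But it does \emph{not} leave $\mathcal{R}^{(i)}_\theta$ informal. The appendix (Stochastic Trajectory-Based Accessibility) defines $\mathcal{R}^{(i)}_\theta(w,w')=\mathbb{E}_\omega\bigl[\sup_{s\in[0,S]}\exp\bigl(-\|w'-\Phi^{(i),s}_\theta(w;\omega)\|^2/2\sigma_b^2\bigr)\bigr]$, with a hard supremum over time inside the expectation. You replace this by $\int_0^S\mathbb{E}[k(\Phi^s,w')]\,ds$. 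For that functional your Steps 1--4 are the standard argument and are fine, but the substitution discards the nontrivial part of the statement. Your remark that ``any similar choice built from expectations of $C^1_b$ test functions of the flow'' is handled the same way does not cover the supremum. Pathwise, $(w,\theta)\mapsto\sup_s k(\Phi^s_\theta(w;\omega),w')$ is only Lipschitz, with constant $\|\nabla k\|_\infty\sup_{s\le S}\|J_s\|$. By Danskin's theorem it is differentiable only where the maximizing time $s^*(\omega)$ is unique, and there the derivative is $\nabla_z k(\Phi^{s^*},w')\,J_{s^*}$. So Step 3 has to be redone:
(i) you need almost-sure uniqueness of $s^*$ at each $(w,\theta)$; dominated convergence with your moment bound on $\sup_s\|J_s\|$ then gives $\nabla\mathcal{R}=\mathbb{E}[\nabla_z k(\Phi^{s^*},w')\,J_{s^*}]$;
(ii) for the derivative to be continuous you need continuity in probability of $s^*$ in $(w,\theta)$, which follows from (i) and uniform-in-$s$ continuity of the flow.
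The adjoint must then be initialized at the random time $s^*$.

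\textbf{Condition (i) can fail, and then the statement is false.} ``Lipschitz with bounded derivatives'' allows degenerate, even zero, diffusion. Take $\sigma\equiv 0$, the planar rotation $f(z)=(-z_2,z_1)$ (truncated smoothly outside a ball containing the circle $|z|=r$ if bounded coefficients are wanted), $S=\pi$, $w=r(\cos\alpha,\sin\alpha)$ and $w'=r'(0,-1)$ with $r,r'>0$. The orbit is the arc of angles $[\alpha,\alpha+\pi]$. For small $|\alpha|$ its closest point to $w'$ is an endpoint at angular distance $\pi/2-|\alpha|$, so $\mathcal{R}=e^{-(r^2+r'^2)/2\sigma_b^2}\,e^{rr'\sin|\alpha|/\sigma_b^2}$. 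This has a kink at $\alpha=0$, where the two endpoints tie and $s^*$ jumps.

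To prove a theorem about the paper's object you must therefore do one of two things. Either add a non-degeneracy hypothesis and carry out (i)--(ii). Or replace the hard $\sup_s$ by a soft maximum such as $\mathbb{E}\bigl[\tau\log\frac{1}{S}\int_0^S e^{k(\Phi^s,w')/\tau}\,ds\bigr]$, in the spirit of the paper's own softmin operators; then your Steps 1--4 go through with only an extra chain rule.

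Relatedly, the obstacle you single out is not the real one. For a statement about an expectation you do not need a pathwise $C^1$ modification or Kolmogorov's criterion. $L^p$-differentiability of the flow and continuity in probability of $J_s$ suffice. A pathwise $C^1$ modification via Kunita would in any case require H\"older-continuous derivatives of the coefficients, not merely bounded ones.
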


\vspace{-0.85em}\section{Complete Proofs}
\label{sec:appendix:proofs}

\subsection{Proof of Theorem~\ref{thm:soundness} (Soundness)}

\begin{proof}
\textbf{Part (a):} By structural induction on $\phi$.  Base case: $L_p(w) \leq \mathcal{P}(S_{p,w})$ by assumption.  Inductive case ($\Box_i\phi$): by hypothesis $L_\phi(z) \leq \mathcal{P}^{\mathrm{risk}}_\tau(S_{\phi,z})$ pointwise.  Since population softmin (Definition~\ref{def:entropic_risk}) is monotone:
$$L^{\mathrm{pop}}_{\Box_i\phi}(w) \leq \mathrm{softmin}_{\tau_\omega,\omega}(\mathrm{softmin}_{\tau_s,s}(\mathcal{P}^{\mathrm{risk}}_\tau(S_{\phi,\Phi^s(w;\omega)}))) = \mathcal{P}^{\mathrm{risk}}_\tau(S_{\Box_i\phi,w})$$

\textbf{Part (b):} Define $X_n = \exp(-g(\omega_n)/\tau_\omega)$; the $\{X_n\}$ are i.i.d.\ in $[e^{-B/\tau_\omega}, e^{B/\tau_\omega}]$.  The range $R_X = e^{B/\tau_\omega} - e^{-B/\tau_\omega}$ blows up as $\tau_\omega \to 0$ unless $B = O(\tau_\omega)$.  Under $B \leq C\tau_\omega$, $R_X \leq 2(e^C - e^{-C})$ is bounded; Hoeffding on $\bar{X}$ and propagation through $h(x) = -\tau_\omega\log x$ (Lipschitz constant $\tau_\omega/\xi \leq \tau_\omega e^{B/\tau_\omega} \leq \tau_\omega e^C$) yields $\mathbb{P}(|\hat{L}-L^{\mathrm{pop}}| \ge \epsilon)\le 2\exp(-c(C)\,N_{\mathrm{mc}}\epsilon^2/\tau_\omega^2)$.  Explicitly, $c(C) = 2/\big((e^C-e^{-C})^2\,e^{2C}\big)$ from the Hoeffding range $R_X = e^C - e^{-C}$ and Lipschitz constant $\tau_\omega/\xi \leq \tau_\omega e^C$.  Without $B \leq C\tau_\omega$, the bound acquires an $e^{\Theta(B/\tau_\omega)}$ factor from $R_X^2$ and the Lipschitz constant.

\textbf{Part (c):} As $\tau \to 0$, $\mathrm{softmin}_\tau(X) \to \operatorname{ess\,inf} X$ by monotone convergence.

The operators do not deterministically bound classical Kripke semantics at finite $\tau$.  Since $\mathrm{softmin}_\tau(X) \geq \operatorname{ess\,inf} X$, the CMLNN lower bound may lie above the classical truth value.
\end{proof}

\subsection{Proof of Theorem~\ref{thm:quantifier_noncollapse} (Non-Collapse)}

\begin{proof}
Non-degenerate $\sigma^{(i)}_\theta$ implies $\exists\, w^*, s^*$ such that $\text{Var}(\Phi^{(i),s^*}_\theta(w^*; \cdot)) > 0$ in some coordinate.  Let $m = \mathbb{E}[\Phi^{s^*}(w^*)]$ and choose a smooth Lipschitz predicate, e.g.\ $\phi(z)=\tanh(z_1-m_1)$.  Then some paths yield $\phi \approx 1$ and others yield $\phi \approx -1$, so the soft best-case and soft worst-case aggregations differ, giving $L_{\Box_i\phi}(w^*) < U_{\Diamond_i\phi}(w^*)$.  For $\sigma \equiv 0$, the flow is deterministic: softmin = softmax, so quantifiers collapse.
\end{proof}

\subsection{Proof of Proposition~\ref{prop:sample_complexity} (Sample Complexity)}

\begin{proof}
Define $X_n = \exp(-g(\omega_n)/\tau_\omega)$; the $\{X_n\}$ are i.i.d.\ in $[e^{-B/\tau_\omega}, e^{B/\tau_\omega}]$.  Under $B \leq C\tau_\omega$, the range $R_X$ and Lipschitz constant $\tau_\omega/\xi$ stay bounded; Hoeffding plus propagation through $h(x) = -\tau_\omega\log x$ yields $\mathbb{P}(|\hat{L}-L^{\mathrm{pop}}| \ge \epsilon)\le 2\exp(-c(C)\,N_{\mathrm{mc}}\epsilon^2/\tau_\omega^2)$.  Setting this $\leq \delta$ gives $N_{\mathrm{mc}} \ge \frac{\tau_\omega^2}{c(C)\,\epsilon^2}\log(2/\delta)$.  Without the assumption, $R_X \sim e^{B/\tau_\omega}$ and the exponent acquires $e^{-4B/\tau_\omega}$.
\end{proof}

\vspace{-0.85em}\section{Extended Method Details}
\label{sec:appendix:method}

\subsection{Modal Axiom Correspondence}

\begin{table}[h]
\centering
\caption{Modal axioms and SDE structural counterparts.}
\label{tab:axiom_sde}
\small
\begin{tabular}{@{}llll@{}}
\toprule
\textbf{Axiom} & \textbf{Property} & \textbf{SDE Mechanism} & \textbf{Status} \\ \midrule
\textbf{T}: $\Box\phi \to \phi$ & Veridicality & $z(0) = x_{\text{true}}$ & Encouraged (gap) \\
\textbf{D}: $\Box\phi \to \Diamond\phi$ & Seriality & $\geq 1$ path & Free \\
\textbf{4}: $\Box\phi \to \Box\Box\phi$ & Introspection & Semigroup & Approx. \\
$\neg$\textbf{T} & Fallibility & $z(0) = \hat{x} \neq x$ & Enforced \\
\bottomrule
\end{tabular}
\end{table}

\subsection{Stochastic Trajectory-Based Accessibility (Derived Diagnostic)}

The accessibility of $w'$ from $w$ under modality $i$ is a \emph{derived} quantity (not used in operator definitions):
$$\mathcal{R}^{(i)}_\theta(w, w') = \mathbb{E}_\omega\left[\sup_{s \in [0, S]} \exp\left( -\frac{\| w' - \Phi^{(i),s}_\theta(w; \omega) \|^2}{2\sigma_b^2} \right)\right]$$

\subsection{Dynamic Logic}
Action $\alpha$ corresponds to SDE $f^{(\alpha)}_\theta$; sequential composition $[\alpha;\beta]\phi = [\alpha][\beta]\phi$ chains SDE solutions.

\subsection{Wasserstein Distance as Modal Metric}
The modal Wasserstein distance $\mathcal{W}_p^{(i,j)}(w, s) = W_p(\text{Law}(\Phi^{(i),s}_\theta(w; \cdot)), \text{Law}(\Phi^{(j),s}_\theta(w; \cdot)))$ enables graded hallucination quantification and improved diffusion learning.

\subsection{Adversarial Contradiction Mining}
We actively search for corner-case contradictions by gradient ascent on $c(w) = \max(0, L_\phi(w) - U_\phi(w))^2$ over candidate worlds, then add high-contradiction worlds to the training batch.

\vspace{-0.85em}\section{Extended Experimental Details}
\label{sec:appendix:experiments}

Full experimental setup, hyperparameters, and additional results for all three case studies are provided below.

\begin{table}[t]
\centering
\caption{Experimental design: each case study exercises a different modal logic.}
\label{tab:experiment_overview}
\resizebox{\linewidth}{!}{
\begin{tabular}{@{}clllll@{}}
\toprule
\textbf{\#} & \textbf{Experiment} & \textbf{Modal Type} & \textbf{Key Formula} & \textbf{SDE} & \textbf{Domain} \\ \midrule
1 & Swarm Hallucination & Epist.+Dox. & $B_{\text{r3}}(\Box\text{safe}) \land K_{\text{sw}}(\Diamond\text{collision})$ & Epist.+Dox. & Custom 2D Swarm \\
2 & Lorenz Chaotic & Temporal & $\Box\text{(bounded)} \land \Diamond\text{(visits\_lobe)}$ & Temporal & Lorenz-63 \\
3 & Safe Confinement & Deontic & $O(\Box\text{safe})$ & Temp.+Deon. & 2D Vessel \\
\bottomrule
\end{tabular}}
\end{table}

\subsection{Case Study 1: Multi-Agent Swarm Modal Logic}

The hallucination detection experiment requires three semantically distinct SDEs operating in a shared state space $\mathcal{W} \subseteq \mathbb{R}^{4 \times 5}$ (position and velocity for 5 rovers).  The temporal SDE models true physics, the epistemic SDE conditions on healthy sensor data, and the doxastic SDE uses Rover~3's faulty internal model.

\paragraph{Training dynamics.}  Fig.~\ref{fig:cs1_metrics} tracks the three key quantities across 5 seeds.  The doxastic bound $L_{B_3}(\Box\,\text{safe})$ rises above the 0.8 detection threshold within the first few epochs and stabilises, Rover~3 persistently believes it is safe.  Meanwhile, the epistemic bound $U_{K_{\text{sw}}}(\Diamond\,\text{collision})$ rises above 0.3, confirming the swarm collectively knows a collision path exists.  The Wasserstein belief--reality gap converges to $W = 0.39$, providing a continuous scalar measure of how far Rover~3's believed future diverges from the true future.

\begin{figure}[h]
\centering
\includegraphics[width=\linewidth]{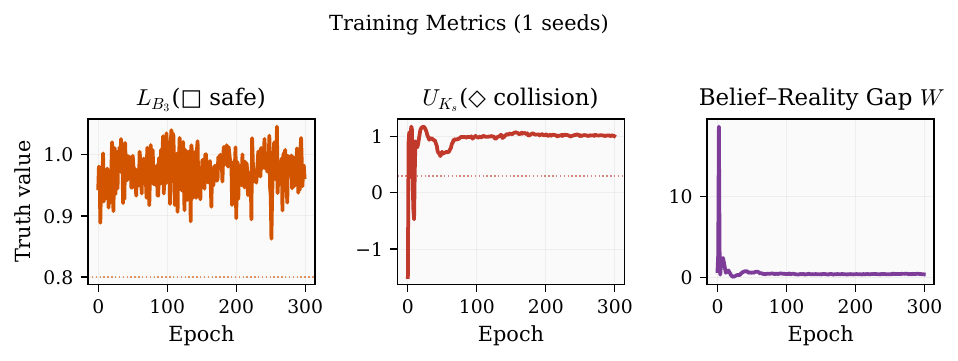}
\caption{Training dynamics (5 seeds, mean $\pm$ std).  Left: doxastic bound $L_{B_3}(\Box\,\text{safe})$ rises above the 0.8 threshold.  Centre: epistemic bound $U_{K_{\text{sw}}}(\Diamond\,\text{collision})$ rises above 0.3.  Right: Wasserstein belief--reality gap stabilises at $W = 0.39$.}
\label{fig:cs1_metrics}
\end{figure}

\paragraph{Where does hallucination arise?}  The answer is spatial: detection depends on which region of the arena Rover~3 occupies.  Fig.~\ref{fig:cs1_field_safety} shows the safety field as seen by the true model versus Rover~3's faulty model, the real chasm and the hallucinated chasm are in different locations.  Fig.~\ref{fig:cs1_modal_field} maps the resulting modal truth values over a $50 \times 50$ grid of starting positions.  Hallucination is strongest where the doxastic SDE predicts safety but the epistemic SDE predicts danger, exactly the trajectories passing through the true chasm.

\begin{figure}[h]
\centering
\includegraphics[width=\linewidth]{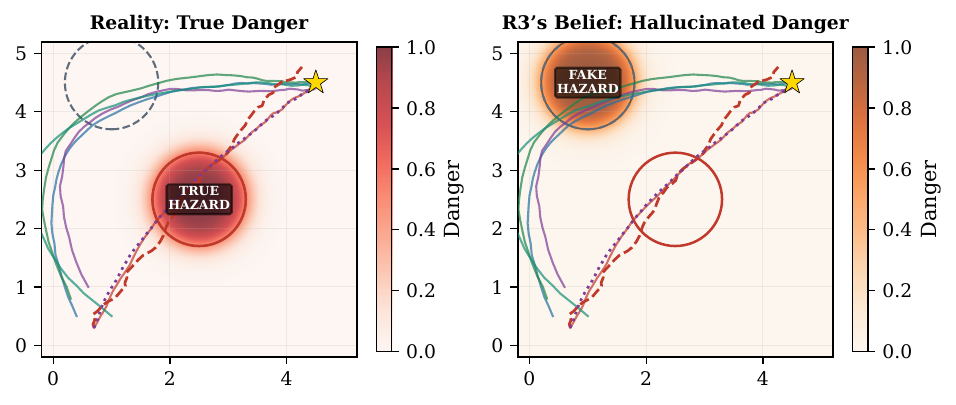}
\caption{Safety field over the arena.  Left: true danger gradient centred on the real chasm.  Right: danger field as believed by Rover~3, centred on the hallucinated (fake) chasm.}
\label{fig:cs1_field_safety}
\end{figure}

\begin{figure}[h]
\centering
\includegraphics[width=\linewidth]{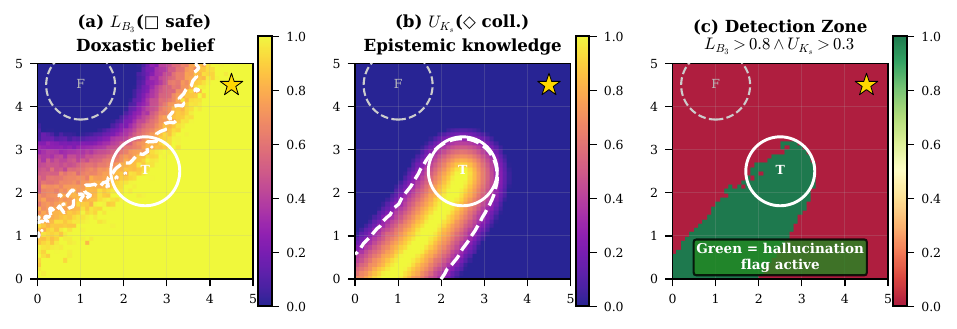}
\caption{Spatial modal truth field ($50 \times 50$ grid; raw robustness $L$, $U$).  Left: $L_{B_3}(\Box\,\text{safe})$.  Right: $U_{K_{\text{swarm}}}(\Diamond\,\text{collision})$.  White dashed contours show detection thresholds.}
\label{fig:cs1_modal_field}
\end{figure}

\paragraph{Sub-experiment B: Sustained escort.}  A rescuer rover learns to shield Rover~3 around the true chasm and guide it toward the goal.  Fig.~\ref{fig:cs1_escort_arena} shows the learned escort trajectory; Fig.~\ref{fig:cs1_escort_metrics} tracks the actual safety margin, believed safety, and the Wasserstein trust gap ($W = 1.53$).  The high trust gap indicates the rescuer must intervene precisely because Rover~3's own model does not perceive the danger.

\begin{figure}[h]
\centering
\includegraphics[width=\linewidth]{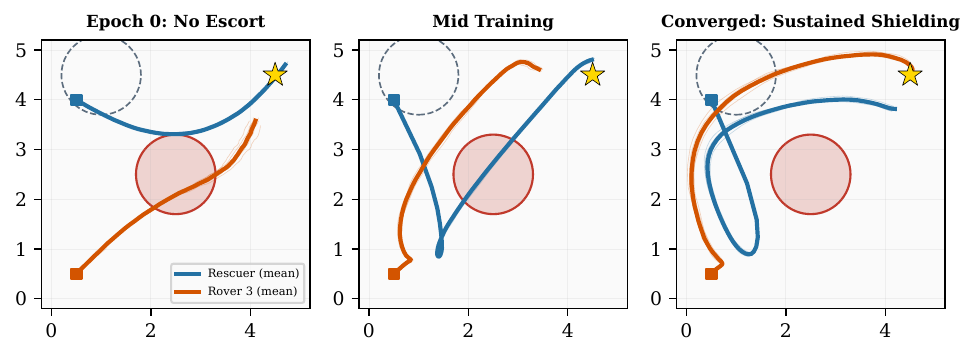}
\caption{Sub-Exp~B: 2-rover sustained escort.  The rescuer (blue) shields Rover~3 (orange) past the true chasm while R3 homes toward the goal (star).  Final actual safety: $0.82$.}
\label{fig:cs1_escort_arena}
\end{figure}

\begin{figure}[h]
\centering
\includegraphics[width=\linewidth]{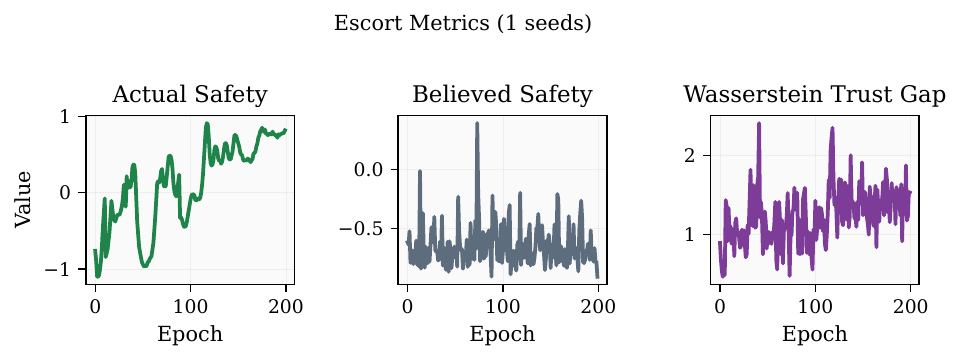}
\caption{Escort training metrics.  Left: actual safety margin improves.  Centre: believed safety remains high (R3 never perceives the true hazard).  Right: Wasserstein trust gap ($W = 1.53$).}
\label{fig:cs1_escort_metrics}
\end{figure}

\paragraph{Sub-experiment C: Swarm defense.}  Five agents jointly optimise five temporal logic objectives: scouts defend a VIP while evading enemies (Fig.~\ref{fig:cs1_defense_arena}).  This exercises the CMLNN's ability to handle multiple formulas over shared dynamics.  The final formula satisfaction values (Fig.~\ref{fig:cs1_defense_formulas}) show that safety and boundary constraints are met ($\Box$~in\_bounds: $0.16$, $\Box$~safe: $0.12$, $\Box$~follow: $-0.03$), while the offensive and exploration objectives exhibit the expected difficulty ($\Diamond$~attack: $-0.20$, $\Diamond$~explore: $-0.47$).

\begin{figure}[h]
\centering
\includegraphics[width=\linewidth]{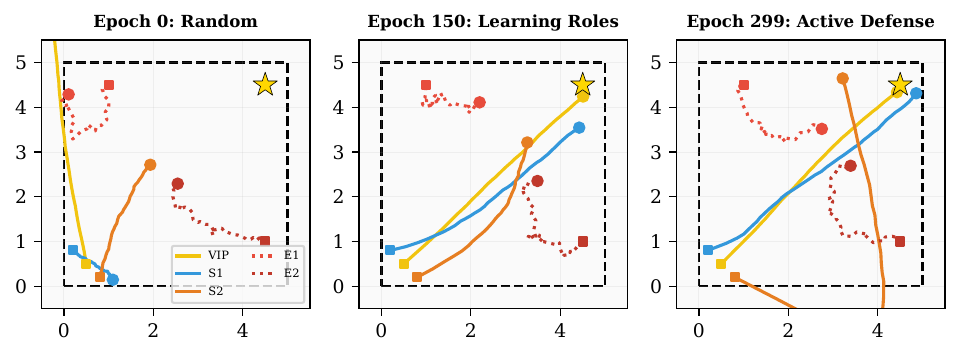}
\caption{Sub-Exp~C: 5-agent swarm defense (VIP protection).  Two scouts defend a VIP against two enemies while the VIP explores toward the goal.}
\label{fig:cs1_defense_arena}
\end{figure}

\begin{figure}[h]
\centering
\includegraphics[width=\linewidth]{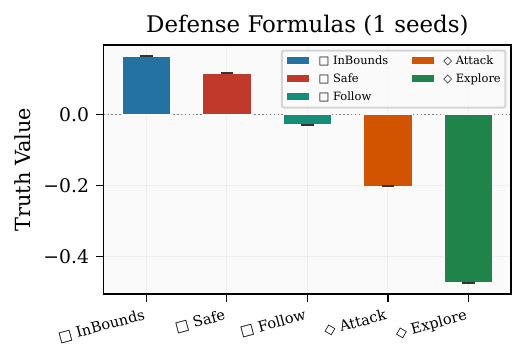}
\caption{Defense formula satisfaction.  Five temporal logic objectives jointly optimised.}
\label{fig:cs1_defense_formulas}
\end{figure}

\subsection{Case Study 2: Lorenz Chaotic System (LINNs)}

The Lorenz-63 system provides a challenging benchmark for modal logic: the formula $\Box(\text{bounded}) \land \Diamond(\text{visits\_lobe})$ requires all trajectories to remain bounded (necessity) while some must visit the secondary lobe (possibility).  This is impossible for deterministic models (quantifier collapse) and requires the SDE's stochastic branching.

\paragraph{Training protocol.}  Training proceeds in two phases (Fig.~\ref{fig:cs2_training}).  Phase~1 pre-trains the SDE drift on trajectory MSE.  Phase~2 ramps in the LINN logic loss via a scheduled $\lambda$, gradually shifting the objective from trajectory matching toward logical consistency.  All five modal formulas converge.

\begin{figure}[h]
\centering
\includegraphics[width=\linewidth]{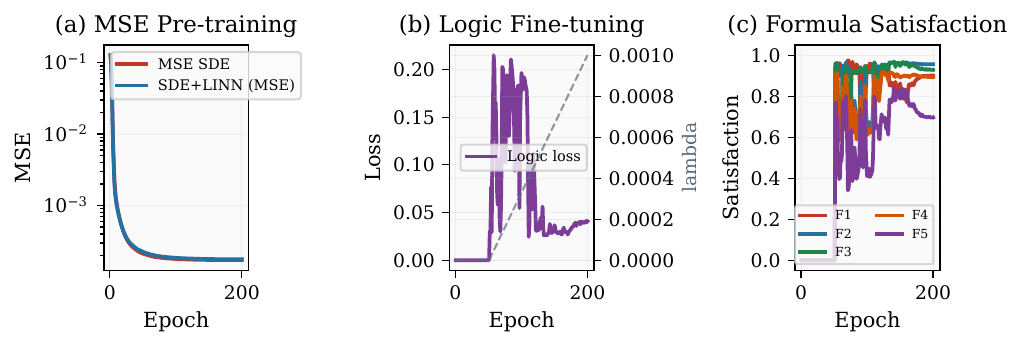}
\caption{Training dynamics.  (a)~MSE pre-training.  (b)~Logic loss and $\lambda$ ramp.  (c)~Formula satisfaction converges.}
\label{fig:cs2_training}
\end{figure}

\paragraph{Quantifier divergence and structural properties.}  The defining feature of a non-collapsed modal logic is that $\Diamond\phi - \Box\phi > 0$.  Fig.~\ref{fig:cs2_structural}(a) shows this gap over the forecast horizon: SDE+LINN tracks the ground truth while deterministic models remain at zero.  Fig.~\ref{fig:cs2_structural}(b) confirms that SDE+LINN recovers the balanced lobe visitation (${\sim}0.5/0.5$) characteristic of the true Lorenz attractor.

\begin{figure}[h]
\centering
\begin{minipage}[b]{0.48\linewidth}\centering
  \includegraphics[width=\linewidth]{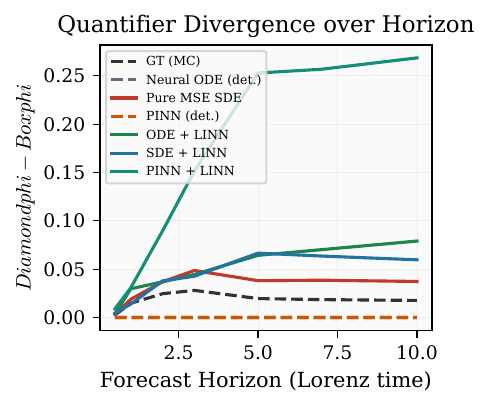}\\(a) Quantifier divergence
\end{minipage}\hfill
\begin{minipage}[b]{0.48\linewidth}\centering
  \includegraphics[width=\linewidth]{i/cs2_lobe_visitation.pdf}\\(b) Lobe visitation
\end{minipage}
\caption{(a)~$\Diamond\phi - \Box\phi$ over forecast horizon: SDE+LINN tracks the ground truth while deterministic models remain at zero (quantifier collapse).  (b)~Lobe visitation: SDE+LINN matches the ground truth ${\sim}0.5/0.5$ balance.}
\label{fig:cs2_structural}
\end{figure}

\paragraph{Soundness verification.}  A sound modal operator must satisfy $L_\Box \leq \text{GT}_\Box$ and $U_\Diamond \geq \text{GT}_\Diamond$ in the same bounded robustness scale (Sec.~\ref{sec:method:modal_ops}).  Fig.~\ref{fig:cs2_soundness} confirms SDE+LINN satisfies both conditions, while pure MSE models are overconfident.

\begin{figure}[h]
\centering
\includegraphics[width=\linewidth]{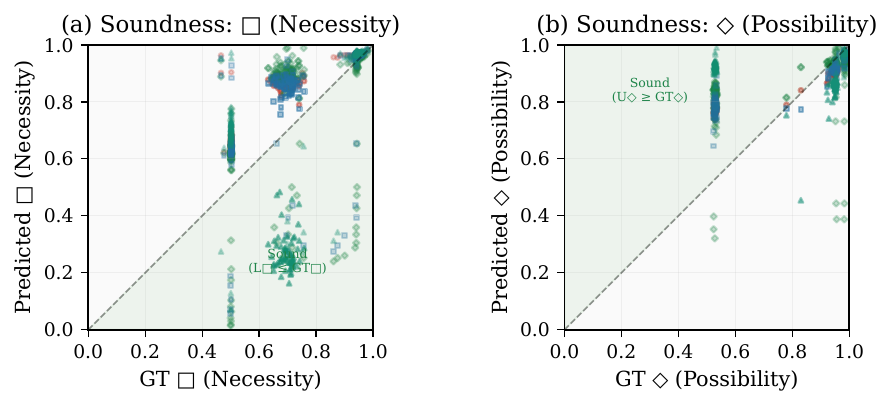}
\caption{Soundness verification: predicted $\Box \leq$ GT $\Box$ (left) and predicted $\Diamond \geq$ GT $\Diamond$ (right).  SDE+LINN satisfies both.}
\label{fig:cs2_soundness}
\end{figure}

\paragraph{Full model comparison.}  Fig.~\ref{fig:cs2_phase} shows $x$--$z$ phase portraits for all eight models.  Deterministic models (Neural ODE, PINN) produce single trajectories; PINNs diverge from the attractor basin despite knowing the true equations.  SDE+LINN produces the richest two-lobe attractor coverage, closest to the ground truth, demonstrating that logical constraints can recover global structural properties that local dynamics alone cannot enforce.

\begin{figure}[h]
\centering
\includegraphics[width=\linewidth]{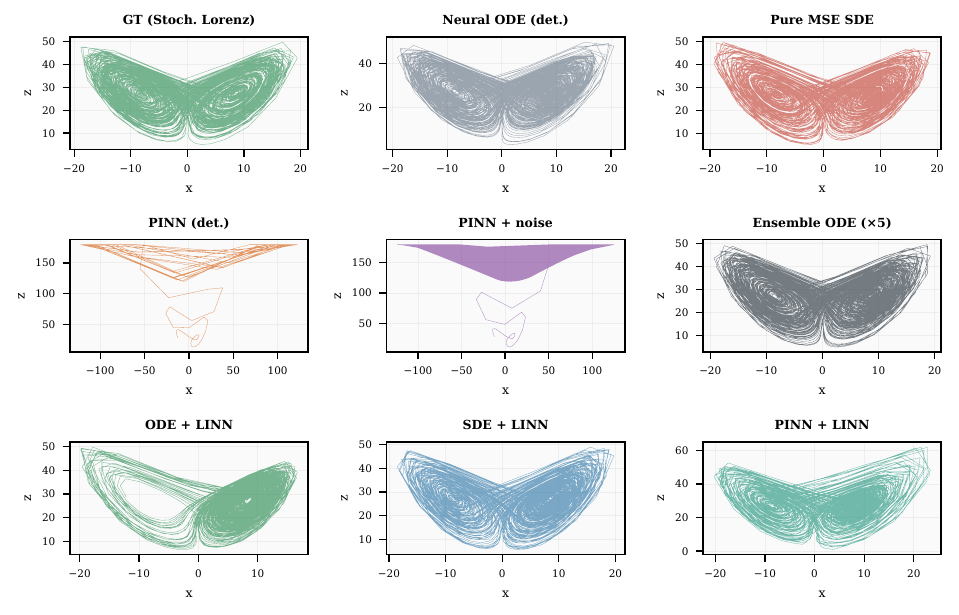}
\caption{Phase portraits ($x$--$z$ projection) for all eight models plus ground truth.  SDE+LINN produces the richest two-lobe attractor coverage.}
\label{fig:cs2_phase}
\end{figure}

\subsection{Case Study 3: Safe Confinement Dynamics}

The deontic case study demonstrates that a logical specification, $O(\Box_{[0,T]}\text{safe})$, can serve as a complete training objective for learning safe dynamics, without reward engineering or hand-crafted control laws.

\paragraph{Overview dashboard.}  Fig.~\ref{fig:cs6_overview} provides a four-panel view of the learned system: trajectory bundles in 2D cross-section (temporal paths exit; deontic paths stay confined), modal truth values over time, 3D toroidal embedding, and a quiver plot of the learned drift field.  The deontic drift curves inward at the vessel wall, creating a learned restoring force.

\begin{figure}[h]
\centering
\includegraphics[width=\linewidth]{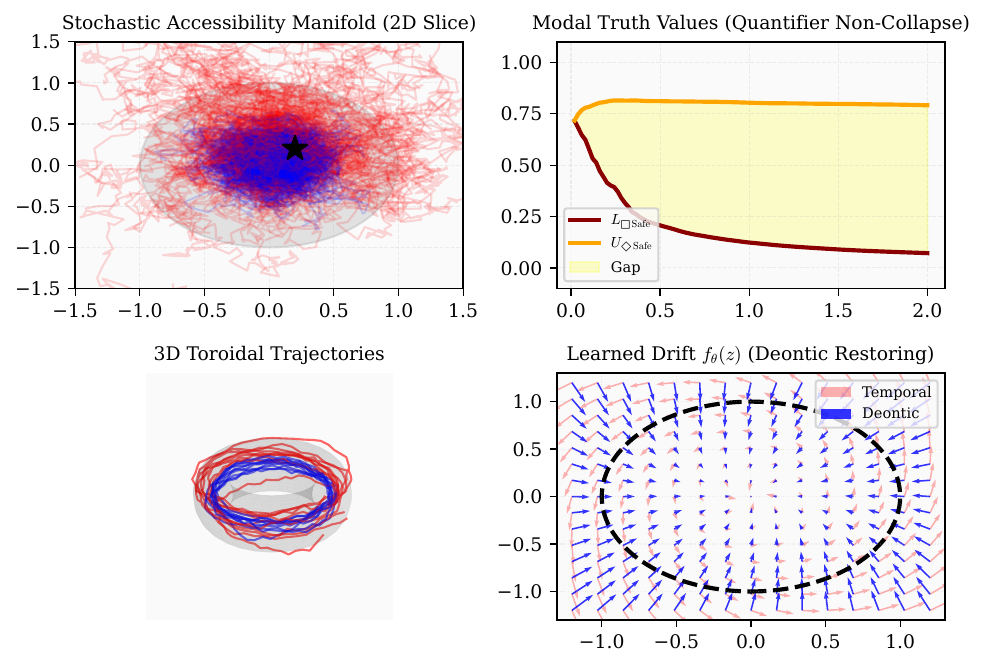}
\caption{Overview dashboard.  Top-left: temporal (red) vs deontic (blue) trajectory bundles.  Top-right: modal truth values and quantifier gap.  Bottom-left: 3D toroidal embedding.  Bottom-right: learned drift fields; the deontic drift curves inward at the wall.}
\label{fig:cs6_overview}
\end{figure}

\paragraph{Per-path safety and exit analysis.}  Fig.~\ref{fig:cs6_safety} tracks the safety value $\phi_{\text{safe}}(z_t)$ along individual sample paths.  Under temporal dynamics, safety degrades as paths drift outward; under the learned deontic dynamics, safety is maintained throughout the horizon.  Fig.~\ref{fig:cs6_distributions} quantifies this: the temporal exit fraction rises steadily while the deontic exit fraction remains at zero, and the radial distributions confirm that deontic paths are concentrated well inside the vessel boundary.

\begin{figure}[h]
\centering
\includegraphics[width=\linewidth]{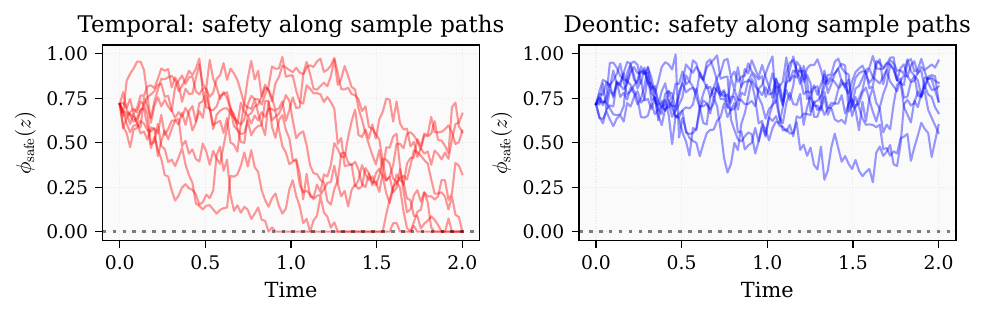}
\caption{Per-path safety $\phi_{\text{safe}}(z_t)$.  Left: temporal dynamics, where safety degrades over time.  Right: deontic dynamics, where safety is maintained throughout the horizon.}
\label{fig:cs6_safety}
\end{figure}

\begin{figure}[h]
\centering
\begin{minipage}[b]{0.48\linewidth}\centering
  \includegraphics[width=\linewidth]{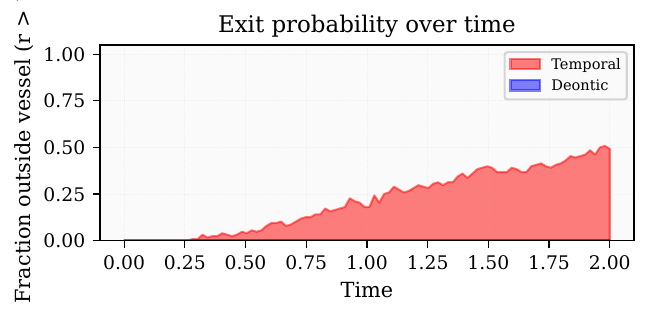}\\(a) Exit probability over time
\end{minipage}\hfill
\begin{minipage}[b]{0.48\linewidth}\centering
  \includegraphics[width=\linewidth]{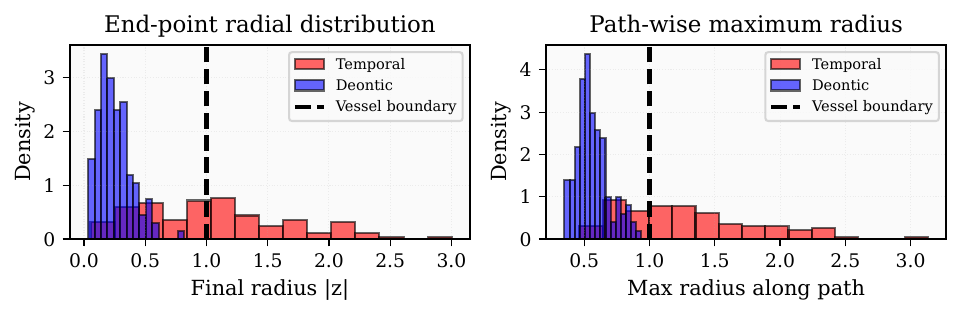}\\(b) Radial distributions
\end{minipage}
\caption{(a)~Exit fraction: temporal rises steadily while deontic remains at zero.  (b)~Radial distributions: deontic paths are concentrated inside the vessel.}
\label{fig:cs6_distributions}
\end{figure}

\paragraph{Training dynamics.}  Fig.~\ref{fig:cs6_training} shows the multi-seed training curves: $L_{\Box}$ and $U_{\Diamond}$ increase as the deontic SDE learns confinement, the exit fraction drops to zero, and the total loss converges.

\begin{figure}[h]
\centering
\includegraphics[width=\linewidth]{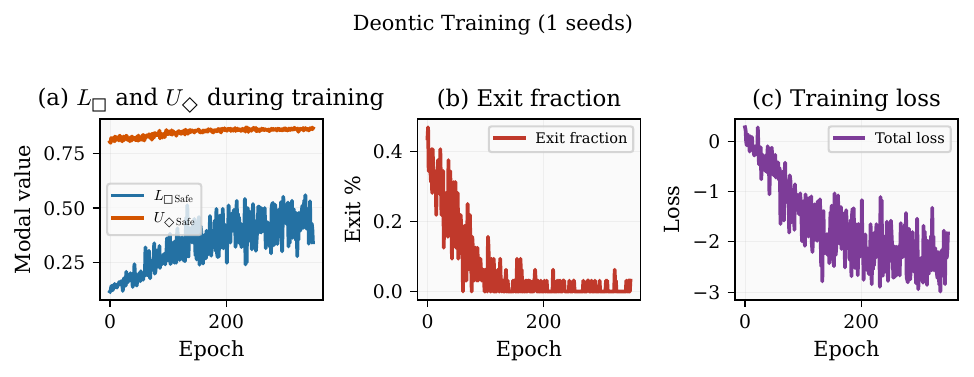}
\caption{Deontic training curves (multi-seed).  (a)~$L_{\Box}$ and $U_{\Diamond}$ increase.  (b)~Exit fraction drops to zero.  (c)~Loss converges.}
\label{fig:cs6_training}
\end{figure}

\paragraph{Learned drift magnitude.}  Fig.~\ref{fig:cs6_heatmap} shows the magnitude of the learned restoring force.  The bright boundary ring in the deontic drift field (right panel) confirms that the network has learned a strong restoring force precisely at the vessel boundary, and this structural property generalises even to mid-radius regions that were not explicitly trained on.

\begin{figure}[h]
\centering
\includegraphics[width=\linewidth]{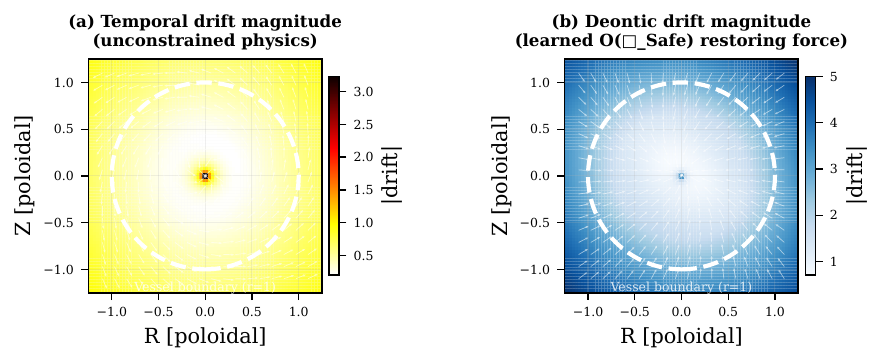}
\caption{Drift magnitude heatmaps.  Left: temporal drift (unconstrained physics).  Right: deontic drift, showing a learned strong restoring force (bright ring) near the vessel boundary.}
\label{fig:cs6_heatmap}
\end{figure}
\end{document}